\DeclarePairedDelimiter\floor{\lfloor}{\rfloor}
\newif\ifreviewmode
\newcommand{\add}[1]{\ifreviewmode\textcolor{blue}{#1}\else#1\fi}
\newcommand{\remove}[1]{\ifreviewmode\textcolor{red}{\sout{#1}}\else\fi}
\newcommand{\replace}[2]{\remove{#1}\add{#2}}
\newcommand{\overbar}[1]{\mkern 1.5mu\overline{\mkern-1.5mu#1\mkern-1.5mu}\mkern 1.5mu}
\newcommand{\N}[0]{\mathbb{N}}
\newcommand{\R}[0]{\mathbb{R}}
\newcommand{\C}[0]{\mathbb{C}}
\newcommand{\Span}[1]{\operatorname{span}\!\left(#1\right)}
\renewcommand{\Trace}[1]{\operatorname{tr}\left(#1\right)}
\newcommand{\lX}[0]{\overbar{X}}
\newcommand\restr[2]{{%
    \left.\kern-\nulldelimiterspace %
    #1 %
    \vphantom{\big|} %
    \right|_{#2} %
}}
\newcommand{\noinitial}[1]{}
\definecolor{SO}{HTML}{e66101}
\definecolor{SOR}{HTML}{fdb863}
\definecolor{ST}{HTML}{5e3c99}
\definecolor{STR}{HTML}{b2abd2}
\pgfplotsset{compat = newest}
\tikzset{arrowfill/.style={top color=black!2, bottom color=black!80, general shadow={fill=black, shadow yshift=-0.8ex, path fading=arrowfading}}}
\tikzset{arrowstyle/.style={draw=black,arrowfill, double arrow,minimum height=#1, double arrow,
single arrow head extend=.4cm,}}
\newtheorem[L]{theorem}{Theorem}
\newtheorem[L]{proposition}{Proposition}
\newtheorem[L]{lemma}{Lemma}
\newtheorem[L]{remark}{Remark}
\newtheorem[L]{corollary}{Corollary}
\newtheorem[L]{problem}{Problem}
\newtheorem[L]{condition}{Condition}
\newtheorem[L]{example}{Example}
\newtheorem[L]{definition}{Definition}
\title{LX-mixers for QAOA: Optimal mixers restricted to subspaces and the stabilizer formalism}
\date{14.11.2024}
\author{Franz G. Fuchs}
\affiliation{SINTEF AS, Department of Mathematics and Cybernetics, Oslo, Norway}
\affiliation{Department of Mathematics, University of Oslo, Norway}
\author{Ruben Pariente Bassa}
\affiliation{SINTEF AS, Department of Mathematics and Cybernetics, Oslo, Norway}
\definecolor{IXII}{RGB}{58.6004535,76.17308133,192.189204015}
\definecolor{IIXX}{RGB}{170.14949570500002,198.68999653499998,253.204599315}
\definecolor{XIXI}{RGB}{246.891866745,183.8152455,156.13471279}
\definecolor{XIII}{RGB}{179.94665529,3.9668208,38.30936706}
\begin{document}
        
\maketitle

\abstract{
We present a novel formalism to both understand and construct mixers that preserve a given subspace.
The method connects and utilizes the stabilizer formalism that is used in error correcting codes.
This can be useful in the setting when the 
quantum approximate optimization algorithm (QAOA), a popular meta-heuristic for solving combinatorial optimization problems, is applied in the setting where the constraints of the problem lead to a feasible subspace that is large but easy to specify.
The proposed method gives a systematic way to construct mixers that are resource efficient in the number of controlled not gates and can be understood as a generalization of the well-known X- and XY-mixers and a relaxation of the Grover mixer: 
Given a basis of any subspace, a resource efficient mixer can be constructed that preserves the subspace.
The numerical examples provided show a dramatic reduction of CX gates when compared to previous results.
We call our approach logical X-Mixer or logical X QAOA (\textbf{LX-QAOA}), since it can be understood as dividing the subspace into code spaces of stabilizers S and
consecutively applying logical rotational X gates associated with these code spaces.
Overall, we hope that this new perspective can lead to further insight into the development of quantum algorithms.
}
  \begin{center}
  \captionsetup{hypcap=false}

        \begin{tikzpicture}[scale=.3]

        \begin{scope}[fill opacity=0.5,text opacity=1]

        \definecolor{IXII}{RGB}{58.6004535,76.17308133,192.189204015};
        \definecolor{IIXX}{RGB}{170.14949570500002,198.68999653499998,253.204599315};
        \definecolor{XIXI}{RGB}{246.891866745,183.8152455,156.13471279};
        \definecolor{XIII}{RGB}{179.94665529,3.9668208,38.30936706};
        
        \def\sx{-16}
        
        \node[] at (\sx+8,0) {\phantom{$\cdots \longrightarrow$}};
        
        \def\sx{0}
        
        \draw[draw = black] (\sx,0) circle (6);
        \draw[fill=gray, draw = black,name path=circle 1] (\sx,0) circle (4);
        \node[] at (\sx,6.7) {$\C^{2^n}$};
        \node[] at (\sx,4.5) {$\Span{B}$};
        
        \node [
        ellipse,
        minimum width=.3cm,
        minimum height=0.625cm,
        draw,
        fill=XIXI, fill opacity=0.25] (V11) at (\sx-2.5,-1) {};
        
        \node [
        ellipse,
        minimum width=.3cm,
        minimum height=0.625cm,
        draw,
        fill=XIXI, fill opacity=0.75] (V12) at (\sx+0.5,-1) {};
        
        \draw [thick, -latex] (V11) to [bend right=45] (V12);
        \draw [thick, -latex] (V12) to [bend right=45] (V11);

        \node[] at (\sx+8,0) {$\longrightarrow$};
        
        \def\sx{16}
        
        \draw[draw = black] (\sx,0) circle (6);
        \draw[fill=gray, draw = black,name path=circle 1] (\sx,0) circle (4);
        
        \node [
        ellipse,
        minimum height=.5cm,
        minimum width=.75cm,
        draw,
        fill=XIII, fill opacity=.25] (V21) at (\sx-.6,-1) {};
        
        \node [
        ellipse,
        minimum height=.5cm,
        minimum width=.75cm,
        draw,
        fill=XIII, fill opacity=.75] (V22) at (\sx-.6,2) {};
        
        \draw [thick, -latex] (V21) to [bend right=45] (V22);
        \draw [thick, -latex] (V22) to [bend right=45] (V21);
        
        \node[] at (\sx,6.7) {$\C^{2^n}$};
        \node[] at (\sx,4.5) {$\Span{B}$};

        \node[] at (\sx+8,0) {$\longrightarrow$};
        
        \def\sx{32}
        
        \draw[draw = black] (\sx,0) circle (6);
        \draw[fill=gray, draw = black,name path=circle 1] (\sx,0) circle (4);
        
        \node [
        ellipse,
        minimum height=.75cm,
        minimum width=.5cm,
        draw,
        fill=IXII, fill opacity=.25] (V31) at (\sx-1,1.5) {};
        
        \node [
        ellipse,
        minimum height=.75cm,
        minimum width=.5cm,
        draw,
        fill=IXII, fill opacity=.75] (V32) at (\sx+2.,1.5) {};
        
        \draw [thick, -latex] (V31) to [bend right=45] (V32);
        \draw [thick, -latex] (V32) to [bend right=45] (V31);
        
        \node[] at (\sx,6.7) {$\C^{2^n}$};
        \node[] at (\sx,4.5) {$\Span{B}$};

        \node[] at (\sx+8,0) {$\longrightarrow \cdots$};
        
        \end{scope}
        \end{tikzpicture}


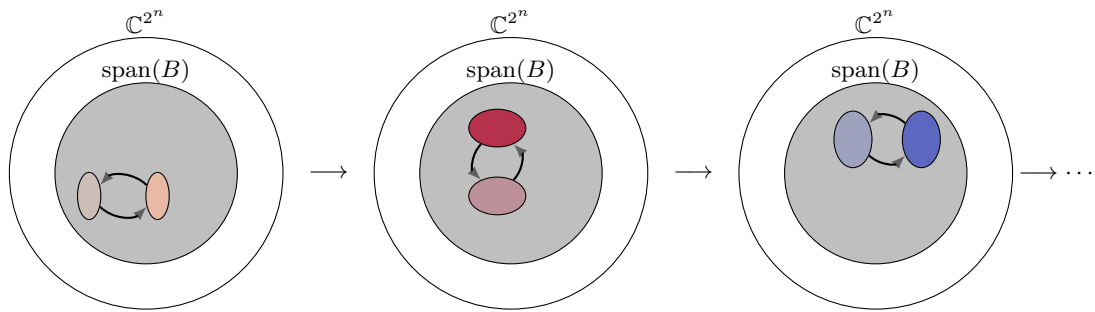
\captionof{figure}{
Schematic view of the proposed method to construct efficient mixers for the quantum alternating operator ansatz (QAOA) constrained to a subspace.
The method can be understood as dividing the subspace into code spaces of stabilizers S and
consecutively applying logical rotational X gates associated with the code spaces.
}
\label{fig:teaser}
  \captionsetup{hypcap=true}
  \end{center}

\section{Introduction}
The quantum approximate optimization algorithm (QAOA)~\cite{farhi2014quantum,hadfield2019quantum} is a popular meta-heuristic for solving combinatorial optimization problems in the form of a quadratic (unconstrained) binary optimization (QUBO) problem.
There is both evidence that points towards an advantage over classical algorithms~\cite{diez2023quantum} as well as against it~\cite{stilck2021limitations}, particularly in the NISQ era.
Given an objective function $f:\{0,1\}^n \rightarrow \R$ one defines the problem Hamiltonian through $H_P\ket{x} = f(x) \ket{x}$.
Consequently, ground states of $H_P$ minimize the objective function.
The general form of QAOA is to prepare a parametrized quantum state given by
\begin{equation}
    \ket{\gamma,\beta} = U_M(\beta_p) U_P(\gamma_p)\cdots U_M(\beta_1) U_P(\gamma_1) \ket{\phi_0}.
\end{equation}
That is, starting with an initial state $\ket{\phi_0}$ one alternates the application of a phase separating operator $U_P(\gamma)$ and mixing operator $U_M(\beta)$.
The goal of QAOA is then to find optimal parameters $\gamma,\beta \in \R$, such that $\bra{\gamma,\beta} H_P \ket{\gamma,\beta}$ is minimized.
Several extensions and variants of QAOA have been proposed since its inception. One such variant is ADAPT-QAOA~\cite{zhu_adaptive_2020}, which is an iterative version tailored to specific hardware constraints. Another variant is R-QAOA~\cite{bravyi2019obstacles}, which recursively removes variables from the Hamiltonian until the remaining instance can be solved classically. A third variant is WS-QAOA~\cite{eggerwarmstarting2021}, which takes into account solutions of classical algorithms to warm-start QAOA.

In this article we consider the case where solutions are constrained to a feasible subspace $\Span{B}\subset \mathcal{H}$ of the full Hilbert space $\mathcal{H}=(\C^2)^{\otimes n}$. The subspace is described by a subset of all computational basis states, i.e.,
\begin{equation}
    B=\left\{\ket*{z_j}, \ \ 1\leq j\leq J, \ \ z_j\in\{0,1\}^n \right\}.%
\end{equation}
There are \replace{two main}{multiple} ways to take constraints into account with QAOA. One popular approach is to penalize unsatisfied constraints in the objective function by formulating them as another QUBO. Although it is straightforward to define a phase separating Hamiltonian for the resulting QUBO, this approach has some downsides. In particular, the probability of obtaining infeasible solutions is typically rather high, which reduces efficiency. Additionally, the quality of the results is very sensitive to the penalization weight, which must be carefully balanced between optimality and feasibility~\cite{Wang2020}. This can be a challenging task since the optimal weight depends on the specific problem and there is a trade-off between optimality and feasibility in the objective function. Applications of this approach include graph coloring problems, the traveling salesperson problem~\cite{hadfield2017quantum}, the Max-k-cut problem~\cite{fuchs2021efficient}, portfolio optimization~\cite{brandhofer2023benchmarking}, and the tail-assignment problem~\cite{PhysRevApplied.14.034009}.
\add{Other approaches have been proposed recently that do not rely on the penalization of the cost function but for example on postprocessing the unfeasible states~\cite{10472069}.
Solving combinatorial optimizations problems with hard constraints can also be done by combining the quantum variational framework with a genetic algorithm for multiobjective optimization~\cite{D_ez_Valle_2023}.
Another alternative is to perform the optimization considering only in-constraint samples obtained from a non-constraint preserving ansatz~\cite{Hao_2022}.
Finally, quantum Zeno dynamics can be employed to restrict the wave function to the feasible subspace via repeated projective measurements using ancillary qubits~\cite{Herman_2023}.}

\replace{The second approach}{The approach presented in this work} to model constraints with QAOA is to design mixers that have zero probability to transition from a feasible state to an infeasible one, thereby rendering the penalization weight from the previous approach unnecessary. However, creating mixers that account for constraints is generally more difficult.
\add{In this work, we define the mixer for feasible states using logical operations, drawing an analogy to quantum error correction. This approach allows us to create a structure termed the logical X-Mixer (\textbf{LX-Mixer}). The schematic representation of our construction is illustrated in Figure~\ref{fig:teaser}, where each logical X rotation facilitates mixing exclusively between codespaces within the feasible subspace.}
The $XY$-mixer \cite{hadfield2017quantum,Wang2020}
is a well-known example in the literature 
\remove{which constrains the evolution to states with non-zero overlap with "k-hot" states, i.e., computational basis states with exactly k entries equal to one.}
\add{which confines the evolution to the subspace of ``k-hot'' states, i.e., subspace spanned by computational basis states with exactly k entries equal to one.
This mixer, which can be described in terms of logical X operations, has been successfully implemented for constrained optimization problems on real quantum devices \cite{Niroula_2022}.}
\remove{The reason for this is that $H = X_iX_j + Y_iY_j$ is a number preserving operator.}

One notable example is GM-QAOA~\cite{bartschi2020grover} which uses an ansatz of the form in Equation~\eqref{eq:Hmdefinition} with $T_{j,k}=1, \forall j, k$, i.e. it is special case of our approach, providing all-to-all mixing of states. An implementation can be based on the Grover-like form $ U_M(\beta) = U_S \left(I + (e^{-i\beta}-1)\ket{0}\bra{0} \right)U_S^\dagger$, where $U_S$ is a circuit creating equal superposition of all feasible states, and $I + (e^{-i\beta}-1)\ket{0}\bra{0}$ is implemented with a rotational gate with $n-1$ controls.
\add{A similar mixer construction, based on continuous quantum walks, that provides mixing for cases of circulant adjacency matrices $T_{jk}$ is proposed in \cite{Wang}.}
However, it might be advantageous to use the full flexibility of choosing $T$ to create valid mixers.
In general, a mixer is called valid if the following conditions hold~\cite{hadfield2019quantum}.
\begin{definition}[Valid mixer]
\label{def:validmixer}
A mixer is called valid if it 
 \textit{preserves the feasible subspace}, i.e.,
    \begin{equation}
    \label{eq:mixer_preserve}
        U_M(\beta)\ket{v} \in \Span{B}, \quad \forall \ket{v} \in \Span{B}, \forall \beta\in\R,%
    \end{equation}
and if it \textit{provides transitions between all pairs of feasible states}, i.e., for each pair of computational basis states $\ket{x}, \ket{y} \in B$ there exist
    $\beta^*\in\R$ and $r \in \N\cup\{0\}$, such that 
    \begin{equation}
    \label{eq:mixer_transition}
        |\bra{x}
        \underbrace{U_M(\beta^*)\cdots U_M(\beta^*)}_{r\text{ times}}
        \ket{y}|
            >0.
    \end{equation}
\end{definition}

However, in this article we choose a different ansatz of the form~\cite{fuchs2022constraint}:
\begin{equation}
    U_M(\beta) = e^{-i\beta H_M},
    \quad H_M = \sum_{j<k} (T)_{j,k} H_{z_j\leftrightarrow z_k},
    \quad H_{z_j\leftrightarrow z_k} = \ket*{z_j}\!\!\bra*{z_k} + \ket*{z_k}\!\!\bra*{z_j}.
    \label{eq:Hmdefinition}
\end{equation}

The space of bounded linear operators $\mathcal{B}(\mathcal{H})$ is endowed with the Hilbert-Schmidt inner product $(A,B)=\Trace{A^\dagger B}$~\cite{siewert2022orthogonal}.
The operator $H_M$ in Equation~\eqref{eq:Hmdefinition} is expressed on the standard/computational basis $\{\ket{i}\bra{j}\}, i,j=0,\dots, 2^n-1$.
In order to allow for a practical implementation we express $H_M$ in the Pauli basis $P_j\in \{ I,X,Y,Z \}^n$, which can be achieved via
\begin{equation}\label{eq:Paulidecomp}
    H_M = \sum_j w_j P_j, \quad w_j=\frac{1}{2^n} \Trace{ P_j^\dagger H_M }\in\R.
\end{equation}

After using a standard Trotterization scheme~\cite{hatano2005finding,trotter1959product} (which is exact for commuting Paul-strings),
\begin{equation}
    U_M(t) = e^{-it H_M} \approx \prod_{\underset{|w_j|>0}{j}} e^{-it w_j P_j},
    \label{eq:Trotterization}
\end{equation}
each term with non-zero weight can be implemented using a standard circuit
consisting of $2 (\operatorname{weight}(P_j)-1)$ controlled not gates $CX$, see Equation~\eqref{eq:eP}.
Lower depth variants to realize the circuit are possible, but the number of $CX$ gates is the same. Here, 
$0 \leq \operatorname{weight}(P_j) \leq n$ is the weight of a Pauli string $P_j$, which is the number of qubits that $P_j$ acts on by a nontrivial Pauli matrix, i.e. $X,Y$ or $Z$.

\begin{equation}
    e^{-i t w_j P} = 
    \begin{quantikz}[row sep={0.5cm,between origins},column sep=1ex]
    \qw&\gate[wires=7]{U}
    & \ctrl{1} & \qw & \qw &\qw& \qw& \qw& \qw & \qw & \qw &\qw& \qw& \qw& \qw& \ctrl{1} & \gate[wires=7]{U^\dagger}& \qw\\
    \qw&\qw& \targ{} &\ctrl{1} & \qw & \qw& \qw& \qw& \qw& \qw & \qw & \qw& \qw& \qw& \ctrl{1} &\targ{}&\qw& \qw\\
    \qw& \qw& \qw \qw & \targ{} & \qw & \qw & \qw& \qw& \qw& \qw & \qw & \qw& \qw& \qw&\targ{} & \qw & \qw & \qw\\
    &\qwbundle[alternate]{}&\qwbundle[alternate]{} &&\ddots&&&&&&&\adots&&&&&\qwbundle[alternate]{}&\qwbundle[alternate]{}\\
    \qw&\qw& \qw & \qw & \qw & \qw & \ctrl{1} & \qw & \qw & \qw & \ctrl{1}& \qw & \qw & \qw & \qw & \qw & \qw& \qw \\
    \qw&\qw& \qw & \qw & \qw & \qw & \targ{} & \ctrl{1} & \qw& \ctrl{1} &\targ{} & \qw & \qw & \qw & \qw & \qw & \qw& \qw \\
    \qw&\qw& \qw & \qw & \qw & \qw & \qw & \targ{} & \gate{R_z(2tw_j)} & \targ{}& \qw & \qw & \qw & \qw & \qw & \qw & \qw & \qw \\
    \end{quantikz},
    U_i=\begin{cases}
        H,& \hspace{-.4cm}\text{if } P_i=X,\\
        SH,& \hspace{-.4cm}\text{if } P_i=Y,\\
        I,& \hspace{-.4cm}\text{if } P_i=Z,
    \end{cases}
    \label{eq:eP}
    \end{equation}
where $S$ is the phase gate and $H$ is the Hadamard gate.
Hence, a good indicator for the cost of a quantum algorithm is given by the number of required $CX$ gates, defined as
\begin{equation}
    \operatorname{Cost}(U_M) = \operatorname{Cost}(H_M) = \sum_{\underset{|w_j|\neq 0}{P_j\neq I}}
    2 \left(\operatorname{weight}(P_j)-1\right).
    \label{eq:optimality}
\end{equation}

In this article we develop algorithms to find constraint preserving mixers that minimize the number of CX gates required.
Given $B$ there is a generally a lot of design freedom for $U_M$ to be a valid mixer, which can heavily impact the cost.%
    There are many matrices $T$ that lead to valid mixers. In Section~\ref{sec:T} we discuss how an optimal strategy to obtain the lowest cost is related to solving a graph optimization problem.
    In Section~\ref{sec:stabilizer} we make a connection between mixers and the stabilizer formalism. This can be utilized to a potentially large reduction of the cost.
    A further reduction of the cost comes when taking into account that the operators only have to act on the feasible subspace, see Section~\ref{sec:restrictingprojectors}
    Finally, we present examples of our approach in Section~\ref{sec:examples}.

\section{Optimal Trotterization and the adjacency matrix}
\label{sec:T}

\begin{figure}
    \centering
     \begin{tikzpicture}
    \node at (-8.75,0) {
    $
        T=\bordermatrix{
                      & \ket{z_1}  \! \! \! \! \! \! & \ket{z_2}    \! \! \! \! \! \!   & \cdots  \! \! \! \! \! \! & \ket{z_i}    \! \! \! \! \! \!   & \ket{z_j}    \! \! \! \! \! \!   & \cdots  \! \! \! \! \! \! & \ket{z_n}    \! \! \! \! \! \!   \cr
        \ket{z_1}     &            & \color{red}1          &  &     1          &               &        &       \color{green}1     \cr
        \ket{z_2}     & \color{red}1          &            &  &                &               &        &            \cr
        \vdots &  &  &  &   &  &  &  \cr
        \ket{z_i}     &      1      &            &  &               &        \color{blue}1      &        &            \cr
        \ket{z_j}     &            &            &  &        \color{blue}1      &               &        &            \cr
        \vdots &  &  &  &   &  &  &  \cr
        \ket{z_n}     & \color{green}1           &            &  &                &               &        &            \cr
                }
    $};
    \node [arrowstyle=2] at (-4,0) {\phantom{matrix}};
            \def \n {5}
            \def \radius {1.5}
              \foreach\s in{1,...,\n}{
                  \node[circle,draw=black, fill=white, inner sep=1pt] (z\s) at (-360/\n*\s:-\radius) {$\ket{z_\ifnum\s=1\relax 1\fi\ifnum\s=2\relax 2\fi\ifnum\s=3\relax i\fi\ifnum\s=4\relax j\fi\ifnum\s=\n\relax n\fi}$};
              };
    \Edge[color=red,position=right](z1)(z2);
    \Edge[color=blue,position=right](z3)(z4);
    \Edge[color=green,position=right](z1)(z5);
    \Edge[color=black,position=right](z1)(z3);
\end{tikzpicture}
    \caption{The matrix $T$ in Equation~\eqref{eq:Hmdefinition} can be interpreted as the adjacency matrix of a graph $G_T$ and vice versa.
    For a mixer to be valid the graph must be connected.
    }
    \label{fig:adjacency}
\end{figure}
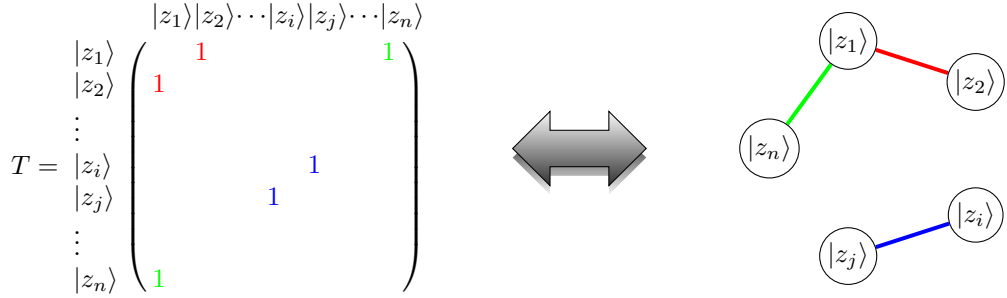

Mixers of the form given in Equation~\eqref{eq:Hmdefinition} are valid if for $T \in\R^{J\times J} $
we have that $T$ is symmetric and $\forall 1\leq j,k \leq J, \exists r_{j,k}\in\N\cup\{0\}$, s.t. $(T^{r_{j,k}})_{j,k} \neq 0$~\cite[Theorem~1]{fuchs2022constraint}.
This condition is not very intuitive, is cumbersome to check and it is not clear how one would practically find optimal mixers using it.
In this section we will describe a practical way to construct a valid Trotterization along with a condition of optimality.

\subsection{Valid Trotterization}
The matrix $T$ in Equation~\eqref{eq:Hmdefinition} can be interpret as the adjacency matrix of a graph $G_T$, see Figure~\ref{fig:adjacency}. 
A mixer provides transitions between all pairs of feasible states if $G_T$ if a connected graph.

\begin{theorem}[Valid mixer]
\label{theorem:adjacencygraphconnected}
The mixer Hamiltonian $H_M$ defined in Equation~\eqref{eq:Hmdefinition} is valid according to Definition~\ref{def:validmixer} if the graph $G_T$ of the adjacency matrix $T$ is undirected and connected.
\end{theorem}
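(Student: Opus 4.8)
The plan is to reduce the two graph hypotheses to the purely algebraic criterion already available from \cite[Theorem~1]{fuchs2022constraint}: the mixer $H_M$ is valid whenever $T$ is symmetric and, for every pair $j,k$, there is an exponent $r_{j,k}\in\N\cup\{0\}$ with $(T^{r_{j,k}})_{j,k}\neq 0$. So it suffices to show that ``$G_T$ undirected'' gives the symmetry of $T$ and that ``$G_T$ connected'' gives the reachability condition on the powers of $T$. The first of these is essentially a restatement of definitions: $G_T$ is the graph on vertices $z_1,\dots,z_J$ with an edge $\{j,k\}$ weighted by $(T)_{j,k}$ whenever $(T)_{j,k}\neq 0$, and declaring this graph undirected is exactly the statement $(T)_{j,k}=(T)_{k,j}$, i.e. $T=T^{\top}$.

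For the connectivity part I would invoke the standard walk-counting identity for a weighted adjacency matrix: $(T^{r})_{j,k}$ equals the sum, over all walks of length $r$ from $k$ to $j$ in $G_T$, of the products of the edge weights along the walk. If $G_T$ is connected, then for each pair $j,k$ there is a walk of some length $r_{j,k}$, so the corresponding sum contains at least one nonzero term. The step I expect to be the main obstacle is that, for a matrix $T$ with arbitrary real entries, distinct walk contributions could cancel, so the mere existence of a walk does not by itself force $(T^{r_{j,k}})_{j,k}\neq 0$. I would remove this obstacle by using that an adjacency matrix has non-negative entries: every walk then contributes a non-negative product, the walk witnessing connectivity contributes a strictly positive one, and no cancellation is possible, giving $(T^{r_{j,k}})_{j,k}>0$. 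With symmetry and reachability established, \cite[Theorem~1]{fuchs2022constraint} yields validity, i.e. both preservation of $\Span{B}$ and transitions between all pairs in $B$.

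For completeness I would also sketch a self-contained argument that bypasses the cited theorem. Since each $H_{z_j\leftrightarrow z_k}=\ket*{z_j}\!\bra*{z_k}+\ket*{z_k}\!\bra*{z_j}$ maps $\Span{B}$ into itself and annihilates its orthogonal complement, so does $H_M$ and hence every term of the series $U_M(\beta)=\sum_{m\ge 0}\frac{(-i\beta)^m}{m!}H_M^m$, which proves the preservation property; symmetry of $T$ makes $H_M$ Hermitian and $U_M(\beta)$ unitary. For the transition property, note that in the basis $B$ the restriction of $H_M$ to $\Span{B}$ is precisely $T$, so $\bra*{z_j}U_M(\beta)\ket*{z_k}=\sum_{m\ge 0}\frac{(-i\beta)^m}{m!}(T^{m})_{j,k}$ is an entire function of $\beta$ that vanishes identically only if $(T^{m})_{j,k}=0$ for all $m$; connectivity excludes this, so the amplitude is nonzero away from isolated $\beta$. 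Finally, since applying $U_M(\beta^*)$ a total of $r$ times equals $U_M(r\beta^*)$, the single-application statement already supplies the $\beta^*$ and $r$ demanded in Definition~\ref{def:validmixer}.
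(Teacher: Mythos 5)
Your main argument is essentially the paper's own proof: reduce to the algebraic criterion of the cited theorem, obtain symmetry of $T$ from undirectedness, and obtain the condition on powers of $T$ from connectivity via the walk-counting interpretation of $(T^r)_{j,k}$. Your explicit handling of the potential cancellation issue (resolved by non-negativity of adjacency-matrix entries, which the paper's walk-counting step assumes implicitly) and the optional self-contained argument via the restriction of $H_M$ to $\Span{B}$ being $T$ are correct additions, but the core route is the same.
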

\begin{proof}
We show equivalence with~\cite[Theorem~1]{fuchs2022constraint}.
By definition, T is symmetric for an undirected graph. 
If T is the adjacency matrix of the undirected graph $G_T$, then the entry $(T^k)_{i,j}$ gives the number of undirected walks of length k from vertex i to vertex j.
Hence, the non-zero entries of $T^k$ gives all pairs of nodes that are connected by a path of length k.
From this we see that the conditions in~\cite[Theorem~1]{fuchs2022constraint} are fulfilled if $T$ is a connected graph.
\end{proof}

In general, the terms in Equation~\eqref{eq:Hmdefinition} will not commute and one needs to perform a Trotterization
by decomposing $T$ into a sum of symmetric matrices $T_q$,
where for each $q$ all non-zero Pauli-operators of $\sum_{j<k} (T_q)_{j,k} H_{z_j\leftrightarrow z_k}$ have to commute.
In particular, a Trotterized mixer has to fulfill that for all $1\leq j,k \leq |J|$ 
there exist $r_m\in\N\cup\{0\}$ (possibly depending on the pair) such that
$\Big(\prod_{\underset{q\in Q}{m=1}}^M T_{q}^{r_m}\Big)_{j,k} \neq 0$~\cite[Theorem~2]{fuchs2022constraint}.
From the proof of Theorem~\ref{theorem:adjacencygraphconnected} it follows that this condition is equivalent to the following.
\begin{corollary}[Valid Trotterized mixer]
\label{corollary:validtrotterization}
The mixer Hamiltonian $H_M$ defined in Equation~\eqref{eq:Hmdefinition} is valid according to Definition~\ref{def:validmixer} if the undirected graph $G_T=\cup_q G_{T_q}$ is connected
and for each $q$ all non-zero Pauli strings of the mixer $\sum_{j<k} (T_q)_{j,k} H_{z_j\leftrightarrow z_k}$ commute.
\end{corollary}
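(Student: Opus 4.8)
The plan is to verify the two clauses of Definition~\ref{def:validmixer} separately, assigning one hypothesis to each: the per-block commutation will deliver \emph{subspace preservation} \eqref{eq:mixer_preserve}, while connectedness of the union graph $G_T=\cup_q G_{T_q}$ will deliver the \emph{transition} property \eqref{eq:mixer_transition}, reusing the walk interpretation from the proof of Theorem~\ref{theorem:adjacencygraphconnected} together with the cited \cite[Theorem~2]{fuchs2022constraint}.

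First I would establish preservation. The Trotterized mixer is a product over blocks, $U_M^{\mathrm{Trot}}(\beta)=\prod_q \prod_j e^{-i\beta w^{(q)}_j P_j}$, where the inner product runs over the non-zero-weight Pauli strings of block $q$. The point of the commutation hypothesis is that, whenever all these Pauli strings commute, the inner product equals the block exponential \emph{exactly}, i.e. $\prod_j e^{-i\beta w^{(q)}_j P_j}=e^{-i\beta H_{M_q}}$ with $H_{M_q}=\sum_{j<k}(T_q)_{j,k}H_{z_j\leftrightarrow z_k}$. Since each $H_{z_j\leftrightarrow z_k}=\ket*{z_j}\!\bra*{z_k}+\ket*{z_k}\!\bra*{z_j}$ maps $\Span{B}$ into itself and annihilates every computational basis state outside $B$, the operator $H_{M_q}$ leaves $\Span{B}$ invariant, hence so does $e^{-i\beta H_{M_q}}$, and therefore so does the full product. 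This is precisely where commutation is needed: an individual factor $e^{-i\beta w_j P_j}$ need not preserve $\Span{B}$, and only the exact block exponential does.

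Next I would handle transitions by showing that the matrix condition of \cite[Theorem~2]{fuchs2022constraint}, namely that for every pair $j,k$ there are powers $r_m$ with $\big(T_{q_1}^{r_1}\cdots T_{q_M}^{r_M}\big)_{j,k}\neq 0$, is equivalent to connectedness of $G_T=\cup_q G_{T_q}$. As in the proof of Theorem~\ref{theorem:adjacencygraphconnected}, I would read a non-zero entry $(T_q)_{j,k}$ as an edge between $z_j$ and $z_k$ in $G_{T_q}$, so that a non-zero $(j,k)$ entry of a product $T_{q_1}^{r_1}\cdots T_{q_M}^{r_M}$ records a walk from $z_j$ to $z_k$ that uses $r_1$ edges of block $q_1$, then $r_2$ edges of block $q_2$, and so on. If the union graph is connected, every pair is joined by a path; grouping the consecutive edges of that path by the block they belong to and ordering the factors accordingly yields a product of the required form with a non-zero $(j,k)$ entry. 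Conversely, a non-zero entry certifies a walk in $\cup_q G_{T_q}$ between $z_j$ and $z_k$, so the condition forces the union graph to be connected.

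The step I expect to require the most care is the passage from ``a walk exists'' to ``the matrix entry is non-zero.'' For a genuine $0/1$ adjacency matrix the entry literally counts walks and is automatically positive, but the symmetric $T_q$ may carry negative weights, so in principle walks could cancel. I would address this by working at the level of the support, i.e. the adjacency structure, which is all that connectedness and the transition clause depend on: since the amplitude in \eqref{eq:mixer_transition} is an analytic function of $\beta^*$ and we only need it non-zero for \emph{some} choice, it suffices that a connecting walk exists, and one may replace each $T_q$ by its $0/1$ pattern without affecting reachability. The secondary, bookkeeping obstacle is matching the free ordering and the powers $r_m$ of the factors to an arbitrary path that interleaves edges from different blocks, which is resolved by the block-grouping argument above.
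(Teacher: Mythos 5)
Your proposal is correct and follows essentially the same route as the paper: the paper's own argument is just the one-line reduction to \cite[Theorem~2]{fuchs2022constraint} (per-block commutation making each block exponential exact, hence subspace-preserving) together with the walk interpretation from the proof of Theorem~\ref{theorem:adjacencygraphconnected} to identify the product-of-powers condition with connectedness of $\cup_q G_{T_q}$. Your extra care about sign cancellations in weighted $T_q$ is a reasonable observation, but it dissolves in the paper's setting since the $T_q$ are taken to be genuine $0/1$ adjacency matrices.
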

It is straight forward to check if a graph is connected.
What is still missing is a way to find a family of graphs where all non-zero Pauli strings commute.

\subsection{Optimal Trotterization}
How one can one identify all pairs of $B$ such that the corresponding operators $H_{v_k\leftrightarrow v_l}$ all commute?
We start by defining a \textit{logical} X operator $\mathbf{\lX}$
of a pair $\ket{x}\neq \ket{y} \in B$ through
\begin{equation}\label{eq:Xhat}
    \lX_i =
    \begin{cases}
        X,& \text{if } x_i \neq y_i\\
        I, & \text{otherwise}.
    \end{cases}
\end{equation}
The reason why we call it "logical" will become apparent when we make the connection to the stabilizer formalism in Section~\ref{sec:stabilizer}.
For all $\lX \in \{I,X\}^n\setminus\{I\}^n$ we define two sets: the set of all states in $B$ that are connected by applying a particular $\lX$, and pairs of all states that are connected, i.e.,
\begin{equation}
\label{eq:GraphPairs}
\begin{split}
    V_{\lX} &= \big\{ \ket{x} \in B \ \big| \ \exists \ket{y}\in B \ \text{s.t.} \  \lX \ket{x} = \ket{y} \big\}, \\
    E_{\lX} &= \big\{ \{\ket{x_i}, \ket{y_i}\}  \in B\times B  \ \big| \ \lX \ket{x} = \ket{y}\big\}.
\end{split}
\end{equation}
Note that $E_{\lX}$ is a set of \textit{unordered} pairs.
For the example given in Figure~\ref{fig:graph_minimization_problem} we have $V_{IXXX}=\{ \ket{1110},\allowbreak\ket{1001},\allowbreak\ket{0100},\allowbreak\ket{0011}\}$ and $E_{IXXX}=\{ \{\ket{1110},\ket{1001}\}, \{\ket{0100},\ket{0011}\}\}$.

\begin{definition}[Family of valid Mixers]
\label{definition:MixerFamily}
Given a feasible set $B$ we define the family of graphs $(G_{\lX})_{\lX \in \{I,X\}^n\setminus\{I\}^n}$ where $G_{\lX} = (B, E_{\lX})$.
This gives rise to a family of mixers $(H_{\lX})_{\lX \in \{I,X\}^n\setminus\{I\}^n}$ 
\begin{equation}
\label{eq:HlX}
H_{\lX}=
    \sum_{j<k} (T_{\lX})_{j,k} H_{z_j\leftrightarrow z_k} = \sum_{\{\ket{x},\ket{y}\}\in E_{\lX}} H_{x\leftrightarrow y},
\end{equation}
where $T_{\lX}$ is the adjacency matrix of the graph $G_{\lX}$.
For each $\lX$ all non-vanishing Pauli-strings of the mixer $H_{\lX}$ commute, see Corollary~\ref{corollary:commuting}.
\end{definition}
For each logical X operators we can write
\begin{equation}
\label{eq:Xprod}
\sum_{\{\ket{x},\ket{y}\}\in E_{\lX}} H_{x\leftrightarrow y}=
\sum_{\{\ket{x},\ket{y}\}\in E_{\lX}} \big(\ket{x}\bra{y}
+\ket{y}\bra{x} \big)
=
\lX \sum_{\ket{x}\in V_{\lX}} 
 \ket{x}\bra{x}
=
\lX \prod\nolimits_{\Span{ V_{\lX} }}.%
\end{equation}
The action of the Hamiltonian can therefore be understood as a projection onto a feasible subspace followed by a logical X operation.
Furthermore, observe that for any computational basis state $\ket{x}$ the operator $\ket{x}\bra{x}$ has basis in $\{I,Z\}^n$.

\begin{figure}
    \centering
    \begin{tikzpicture}[scale=.58]%

\newcommand{\drawNodes}[3]{
    \node[] at (#2,7.1) {$G_{#3}$};
    \Vertex[style={color=gray},x=#2,y=0  ,label=0011]{0011};
    \Vertex[style={color=gray},x=#2,y=1.5,label=0100]{0100};
    \Vertex[style={color=gray},x=#2,y=3  ,label=1001]{1001};
    \Vertex[style={color=gray},x=#2,y=4.5,label=1100]{1100};
    \Vertex[style={color=gray},x=#2,y=6  ,label=1110]{1110};
}

\newcommand{\drawNodesGray}[3]{
    \node[] at (#2,7.1) {$G_{#3}$};
    \Vertex[style={color=gray},opacity=.25,x=#2,y=0  ,label=0011]{0011};
    \Vertex[style={color=gray},opacity=.25,x=#2,y=1.5,label=0100]{0100};
    \Vertex[style={color=gray},opacity=.25,x=#2,y=3  ,label=1001]{1001};
    \Vertex[style={color=gray},opacity=.25,x=#2,y=4.5,label=1100]{1100};
    \Vertex[style={color=gray},opacity=.25,x=#2,y=6  ,label=1110]{1110};
}

\def\dist{3}

\drawNodes{0}{0*\dist}{IIXI}
\Edge[bend=-45,position=left](1110)(1100)

\drawNodes{1}{1*\dist}{IXIX}
\Edge[bend=+45,position=left](1001)(1100)

\drawNodes{2}{2*\dist}{IXXX}
\Edge[bend=+45,position=left](1001)(1110)
\Edge[bend=+45,position=left](0100)(0011)

\drawNodes{3}{3*\dist}{XIII}
\Edge[bend=+45,position=left](0100)(1100)

\drawNodes{4}{4*\dist}{XIXI}
\Edge[bend=+45,position=left](1001)(0011)
\Edge[bend=-35,position=left](1110)(0100)

\drawNodes{5}{5*\dist}{XXIX}
\Edge[bend=+45,position=left](1001)(0100)
\Edge[bend=-29,position=left](1110)(0011)

\drawNodes{6}{6*\dist}{XXXX}
\Edge[bend=+35,position=left](1100)(0011)

\drawNodes{7}{7*\dist}{IIXI}
\Edge[bend=+45,position=left](1110)(1100)

\end{tikzpicture}
    \caption{Family of graphs for $B=\{\ket{1110},\ket{1100},\ket{1001},\ket{0100},\ket{0011}\}$.
    A graph has an edge between two vertices $v_1, v_2$ if $\lX\ket{v_1} = \ket{v_2}$.
    Each graph has edges for the same operator $\lX$, which ensures that the Pauli-strings of the associated Hamiltonian all commute, see Equation~\eqref{eq:Xprod}.
    }
    \label{fig:graph_minimization_problem}
\end{figure}

All non-empty sets $E_{\lX}$ can be identified in time $\mathcal{O}(|B|^2)$, by going through all pairs of states in $B$ and grouping them together according to the logical X operator between them.
An example of such a set of graphs is given in Figure~\ref{fig:graph_minimization_problem}.

\begin{definition}[Optimal Trotterization]
\label{def:optimalTrotterization}
Let $(G_i)_i$ a set of graphs that contains all graphs according to Definition~\ref{definition:MixerFamily}, together with all possible subgraphs.
For each graph $G_i$ we define the cost $c_i=\sum_{\{\ket{x},\ket{y}\}\in E_i} \operatorname{Cost}(H_{x\leftrightarrow y})$.

\noindent
An optimal mixer is given by $\operatorname{argmin} \{c^T x \ \big| \ G=(B, \bigcup_{i|x_i=1} E_i) \text{ is a connected graph} \}$.
\end{definition}

In the next section we describe a way to efficiently calculate the mixer in the Pauli basis and a way to potentially reduce the cost of the projection operator by restricting its action to the feasible subspace $B$.

\section{Stabilizer formalism to construct mixers}
\label{sec:stabilizer}
In this section we show that mixers have a useful connection to the stabilizer formalism.
This becomes very convenient since it allows us to work with generating sets, instead of all elements of the group, meaning we can work with exponentially fewer elements. 

\subsection{Stabilizer formalism for a single pair of states}
\label{sec:stabilizeronepair}
For $\ket{x}\neq \ket{y}\in B$ one can interpret $C=\Span{\ket{x},\ket{y}}$ as the code space of a stabilizer $S$.
The stabilizer is defined as the commutative subgroup of $\mathcal{P}_n$, such that $S=\{g\in\mathcal{P}_n \ | \ g\ket*{z}=+\ket*{z} \ \forall \ket*{z}\in C\}$.
Here, $\mathcal{P}_n$ is the Pauli group on $n$ qubits.
Because $\operatorname{dim}(C)=2$ the stabilizer has a minimal generating set of $n-1$ elements, i.e., $S = \langle g_1, \cdots, g_{n-1}\rangle$. %
\add{The centralizer $\mathcal{C}$ of the stabilizer group S in $\mathcal{P}_n$ is the set of operators such that $\mathcal{C}=\{p\in\mathcal{P}_n \ | \ ps=sp \ \ \forall s\in S \} $.
The normalizer $\mathcal{N}$ of a stabilizer group consists of the set of operators such that $\mathcal{N}=\{p\in\mathcal{P}_n \ | \ psp^{\dag}\in S \ \ \forall s\in S \} $. Due to the fact that the stabilizer group does not contain $-I$, it turns out that the centralizer is equal to the normalizer.} 
Since $B$ consists of only computational basis states, i.e., strings of zeros and ones, it is very easy to find a generator for the stabilizer through the $\mathcal{O}(n)$ algorithm~\ref{alg:stabilizerC}.
As an example, for the pair $\ket{x}=\ket{10010}, \ket{y}=\ket{01011}$ a minimal generating set is given by $S=\langle Z_3, -Z_4, -Z_1Z_2, Z_2Z_5\rangle$. %
The sum of all elements of the stabilizer group acts as a projector onto the code subspace.
\begin{proposition}[Stabilizer subspace projector]
\label{prop:stabilizersubspaceprojector}
The stabilizer subspace projector $\prod\nolimits_S \coloneqq \operatorname{Proj}(C)$ given by
\begin{equation}
\label{eq:projector}
    \prod\nolimits_{C(S)} = \frac{1}{|S|} \sum_{g\in S} g,
\end{equation}
is the projector onto the code space $C$.
\end{proposition}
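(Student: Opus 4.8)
The plan is to verify that $P \coloneqq \prod\nolimits_S = \tfrac{1}{|S|}\sum_{g\in S} g$ is an orthogonal projector and then pin down its range by a dimension count. First I would record two elementary facts about stabilizer elements: since every $g\in S$ fixes the nonzero vectors of $C$, it must be Hermitian (a Pauli element with phase $\pm i$ is anti-Hermitian and has no $+1$ eigenvalue, so such phases cannot occur in $S$), and $g^2=I$. Hermiticity of each generator-and hence of each group element-immediately gives $P^\dagger=P$.

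Next I would establish idempotence using only the group structure. For any fixed $g\in S$, left multiplication $h\mapsto gh$ is a bijection of $S$ onto itself, so $\sum_{h\in S} gh=\sum_{k\in S}k=|S|\,P$; in particular $gP=P$ for every $g\in S$. Summing this identity over $g$ yields
\[
P^2=\frac{1}{|S|^2}\sum_{g\in S}\sum_{h\in S}gh=\frac{1}{|S|^2}\sum_{g\in S}|S|\,P=P.
\]
Together with $P^\dagger=P$, this shows $P$ is the orthogonal projector onto its $+1$-eigenspace $\operatorname{range}(P)=\{\ket{\psi}\mid P\ket{\psi}=\ket{\psi}\}$.

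Then I would identify that range with $C$. The relation $gP=P$ shows every vector in $\operatorname{range}(P)$ is fixed by all of $S$, so $\operatorname{range}(P)$ lies in the common $+1$-eigenspace of $S$; conversely, for $\ket{\psi}\in C$ we have $g\ket{\psi}=\ket{\psi}$ for all $g$, whence $P\ket{\psi}=\ket{\psi}$, giving $C\subseteq\operatorname{range}(P)$. It remains to rule out that the range is strictly larger than $C$, and this is the only genuinely substantive point. I would settle it by a trace computation: the only elements of $\mathcal{P}_n$ with nonzero trace are $\pm I$, and $-I\notin S$ since it fixes no nonzero vector, so only the identity term survives and $\Trace{P}=\frac{1}{|S|}\Trace{I}=\frac{2^n}{|S|}$. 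Using that $S$ admits a minimal generating set of $n-1$ elements, so that $|S|=2^{n-1}$, this gives $\Trace{P}=2$.

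Because the trace of an orthogonal projector equals the dimension of its range, we obtain $\dim\operatorname{range}(P)=2=\dim C$; combined with the inclusion $C\subseteq\operatorname{range}(P)$ already proved, this forces $\operatorname{range}(P)=C$, which is exactly the claim. The main obstacle is precisely this final dimension count-showing that $S$ stabilizes nothing beyond $C$-which hinges on the group order $|S|=2^{n-1}$ and the tracelessness of every non-identity Pauli string; the projector and idempotency steps are routine once Hermiticity and the bijection $h\mapsto gh$ are in hand.
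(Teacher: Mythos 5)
Your proof is correct, and it takes the same route as the paper for the easy half: both establish $g\prod\nolimits_S=\prod\nolimits_S$ for all $g\in S$ via the bijection $h\mapsto gh$, deduce idempotence, and obtain $C\subseteq\operatorname{Im}(\prod\nolimits_S)$ directly from the definition of the stabilizer. Where you genuinely diverge is in the reverse containment. The paper simply asserts that $g\prod\nolimits_S=\prod\nolimits_S$ for all $g\in S$ implies $\operatorname{Im}(\prod\nolimits_S)\subseteq C$; strictly speaking this only shows the image lies in the \emph{common $+1$-eigenspace} of $S$, and identifying that eigenspace with $C$ (rather than something larger) is exactly the nontrivial content. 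You close this gap with a trace computation: every non-identity Pauli string is traceless, $-I\notin S$, so $\Trace{\prod\nolimits_S}=2^n/|S|=2$ once one uses $|S|=2^{n-1}$ (which the paper records when it notes the minimal generating set has $n-1$ elements), and the trace of an orthogonal projector is the dimension of its range, forcing $\operatorname{Im}(\prod\nolimits_S)=C$. You also verify Hermiticity of each $g\in S$ (ruling out $\pm i$ phases), which the paper leaves implicit but which is needed for "projector" to mean orthogonal projector. Your version is somewhat longer but more self-contained and rigorous on the one point where the paper's argument is circular as written; the paper's version is shorter because it leans on the standard stabilizer-formalism fact that $k$ independent generators cut out a $2^{n-k}$-dimensional code space.
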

\begin{proof}
Observe that $g\prod\nolimits_S=\prod\nolimits_S \forall g \in S$ since it acts on S by permuting its elements. Therefore $\prod\nolimits_S^2=\prod\nolimits_S$.
Hence $\prod\nolimits_S$ is a projector and has eigenvalues 0 or 1.
It is also a projector onto the code space, i.e., $\operatorname{Im}(\prod\nolimits_S) = C$.
Since  $g\prod\nolimits_S=\prod\nolimits_S \forall g \in S$ we have that $\operatorname{Im}(\prod\nolimits_S) \subseteq C$.
Furthermore, if $\ket{\phi}\in C$, then $\prod\nolimits_S \ket{\phi} = \ket{\phi}$ and therefore $C \subseteq \operatorname{Im}(\prod\nolimits_S) $
\end{proof}

For the above example, i.e., the pair $\ket{x}=\ket{10010}, \ket{y}=\ket{01011}$, we have that $\lX = X_1X_2X_5$.
However, there are more logical X operators,
which are all elements of $N\setminus S$ where $N$ is the normalizer of $S$.
All logical X operators are given by the set $\lX S = \{ \lX g \ | \ g\in S\}$.
Indeed, for any $\widehat{X} \in \lX S$ we have that $\widehat{X} \ket{x} = \lX g \ket{x} = \lX \ket{x} = \ket{y}$. %

Now we have all we need to show that $H_{x\leftrightarrow y}$ consists of all logical X operators in the following sense.
\begin{theorem}[Mixer as sum of logical X operators]
\label{theorem:MixerAsX}
Let $x\neq y \in B$, $S$ be the stabilizer of the code space $C=\Span{\ket{x},\ket{y}}$, and $\lX$ be as defined in Equation~\eqref{eq:Xhat}.
Then for $H_{x\leftrightarrow y} = \ket{x}\bra{y} + \ket{y}\bra{x}$ we have that 
\begin{equation}
\label{eq:HxyAsLX}
    H_{x\leftrightarrow y} = \frac{1}{|S|} \sum_{\widehat{X} \in \lX S} \widehat{X}.
\end{equation}
This means $H_{x\leftrightarrow y}$ is the sum of all logical X operators.
\end{theorem}
\begin{proof}
To show equality of the linear operators, it is enough to show equality for all computational basis states.
We start by observing that
\begin{equation}
\label{eq:XS}
    \frac{1}{|S|} \sum_{\widehat{X} \in \lX S} \widehat{X}
    =\frac{1}{|S|} \sum_{g \in S}\lX g
     = \lX \prod\nolimits_{C(S)},
\end{equation}
which means that the action of the right hand side in equation~\eqref{eq:HxyAsLX} is a projection into the code space followed by a logical X operation.
Both sides in equation~\eqref{eq:HxyAsLX} are zero for $\ket{\psi} \in (\C^2)^{\otimes n} \setminus C$.
Furthermore, we have that $\lX \prod\nolimits_S \ket{x} = \lX \ket{x} = \ket{y}$ and equivalently maps $\ket{y}$ to $\ket{x}$. Hence, also in this case both sides in equation~\eqref{eq:HxyAsLX} are equal, which proofs the assertion.
\end{proof}

\begin{corollary}[All non-vanishing Pauli strings commute]
\label{corollary:commuting}
Theorem 3 in~\replace{\cite{fuchs2022constraint}}{\cite{fuchs2021efficient}} now follows as a simple consequence by the construction via the stabilizer formalism.
    From equation~\eqref{eq:HxyAsLX} we can see that all non-vanishing Pauli strings of $H_{x\leftrightarrow y} $ commute, since they are in the normalizer of $S$, which in this case is equal to the centralizer of $S$.
\end{corollary}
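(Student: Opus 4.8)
The plan is to combine Theorem~\ref{theorem:MixerAsX} with a single group-theoretic computation inside the coset $\lX S$. By Equation~\eqref{eq:HxyAsLX}, the Pauli strings that appear in $H_{x\leftrightarrow y}$ are, up to a phase, exactly the elements of the coset $\lX S=\{\lX g \mid g\in S\}$; since a scalar phase does not affect commutation, it suffices to show that any two coset elements $\lX g_1,\lX g_2$ commute as operators.

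The key structural input I would isolate first is that $\lX$ commutes with every element of $S$. As already observed in the text, $\lX$ and its entire coset lie in the normalizer $N(S)$, since $\lX$ maps the code space $C=\Span{\ket{x},\ket{y}}$ to itself. For a stabilizer group this normalizer coincides with the centralizer: if $P\in N(S)$ and $g\in S$, then conjugating one Pauli by another gives $PgP^{-1}=\pm g$, and the sign $-$ is impossible because $-g\in S$ together with $g\in S$ would give $(-g)g^{-1}=-I\in S$, contradicting that $S$ stabilizes a non-zero subspace. Hence $PgP^{-1}=g$ for all $g\in S$, so $N(S)$ equals the centralizer and in particular $\lX g=g\lX$ for every $g\in S$.

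With this in hand the commutativity is a one-line substitution: using that $\lX$ centralizes $S$ and that $S$ is abelian,
\begin{equation*}
(\lX g_1)(\lX g_2)=\lX^2 g_1 g_2=\lX^2 g_2 g_1=(\lX g_2)(\lX g_1).
\end{equation*}
Thus the two Pauli strings commute, and as $g_1,g_2\in S$ were arbitrary, all non-vanishing Pauli strings of $H_{x\leftrightarrow y}$ mutually commute, which is the assertion.

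The one point that genuinely requires care is the normalizer--centralizer identity, together with the observation that it alone does not finish the proof: membership in the centralizer only guarantees that each $\lX g$ commutes with the elements of $S$, whereas the claim is that the $\lX g$ commute with \emph{one another}. Both ingredients are needed---$\lX$ centralizing $S$ and $S$ being abelian---and I expect this to be the main (and essentially the only) subtlety, the remainder being a direct computation.
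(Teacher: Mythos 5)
Your proposal is correct and rests on the same core fact as the paper's own (very terse) justification: the non-vanishing Pauli strings of $H_{x\leftrightarrow y}$ form the coset $\lX S$, which lies in the normalizer of $S$, and for a stabilizer group (one not containing $-I$) the normalizer coincides with the centralizer. Where you go beyond the paper is in recognizing, explicitly, that this alone is not a proof: the centralizer of $S$ is in general non-abelian (e.g.\ a logical $X$ and a logical $Z$ both lie in $N(S)$ yet anticommute), so ``they are in the centralizer'' cannot by itself imply that the strings commute \emph{with one another}. Your completion---that the strings all lie in a \emph{single} coset $\lX S$, that $\lX$ centralizes $S$ via the $PgP^{-1}=\pm g$ argument with $-I\notin S$ ruling out the sign, and that $S$ itself is abelian, whence $(\lX g_1)(\lX g_2)=\lX^2 g_1g_2=(\lX g_2)(\lX g_1)$---is exactly the missing chain of reasoning, and it matches the computation the paper implicitly relies on in Equation~\eqref{eq:XS}. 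In short: same route as the paper, but your version is the rigorous one; the paper's stated reason is incomplete as written, and your final paragraph correctly pinpoints why.
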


\subsection{Stabilizer formalism for multiple pairs of states}
\label{sec:stabilizerPairsOfStates}

When there is more than one pair of states associated with the same logical operator, i.e. when $|E_{\lX}|=|V_{\lX}|/2>1$, we can in general no longer apply the stabilizer formalism directly for all states. %
One can still express the projector
$\prod\nolimits_{\Span{ V_{\lX} }}= \sum_{\ket{x}\in V_{\lX}} \ket{x}\bra{x}$
in Equation~\eqref{eq:Xprod} in the Pauli-basis, e.g. by using Equation~\eqref{eq:Paulidecomp}.
Since this is not necessarily optimal, we show how one can utilize the structure of the space subspace $V_{\lX}$ to construct mixers with as few Pauli-strings as possible. 

We begin with the following observation.
Let $\{ s_1, \cdots, s_l \}$ %
a minimal generating set of a stabilizer group $S=\langle s_1, \cdots, s_l \rangle$.
This defines the code space $C(S)=\Span{\ket{z} \ : \ s \ket{z} = \ket{z} \forall s \in S }$.
Let us now assume that there is a Pauli-string $E$ %
such that $E \ket{z} \notin C(S)$ for $\ket{z} \in C(S)$. In the error correction lingo $E$ would be interpreted as a detectable error.
Then we define $S^E = \langle s^E_1, \cdots, s^E_l \rangle$ with
\begin{equation}
    s^E_i = 
    \begin{cases}
        s_i,& \text{ if } [E, s_i]\add{=0},\\
        -s_i, & \text{ otherwise.}
    \end{cases}
\end{equation}
(Remember that Pauli strings either commute or anti-commute).
It follows that $S^E$ is a stabilizer group that defines the code space $C(S^E) =\Span{ E \ket{z} \ : \ \ket{z} \in C(S)}$. This is easy to see, since for any $\ket{z} \in C(S)$ we have that
\begin{equation}
    s^E_i E \ket{z} = \begin{Bmatrix}
        s_i E \ket{z},& \text{ if }    [s^E_i, E]=0, \\
        - s_i E \ket{z},& \text{ otherwise }
    \end{Bmatrix} =
    \begin{Bmatrix}
         E s_i \ket{z},& \text{ if }    [s^E_i, E]=0, \\
         - E (- s_i) \ket{z},& \text{ otherwise }
    \end{Bmatrix} =
    E \ket{z}.
\end{equation}
We observe that the terms in $S^E$ and $S$ only differ by sign.
Additionally, we know that $\exists s_i \in \{s_1, \cdots, s_l\}$ s.t. $\{s_i,E\} = 0$, i.e, $\exists s^E_i = - s_i$.
We define the index set 
\begin{equation}
    I^c =\{1\leq i \leq l \ | \ s_i=s^E_i\},
\end{equation}
i.e. the index set where the generators coincide, and $I^d = \{1\leq i \leq l\} \setminus I^c$, where they differ. Let
\begin{equation}
    I^{d,2} = \{ (i,j) : i\neq j, \ i,j\in I^d \}, 
\end{equation}
i.e., $s_i=-s^E_i \wedge s_j=-s^E_j$ for $(i,j)\in I^{d,2}$. 
These observations lead to the following
\begin{theorem}[Stabilizers for sum of orthogonal subspaces]
\label{theorem:stabilizersum}
Let $\{ s_1, \cdots, s_l \}$ 
be a minimal generating set of the stabilizer group $S=\langle s_1, \cdots, s_l \rangle$
defining the code subspace $C(S)$.
Let $E$ be a Pauli-string with $E \ket{z} \notin C(S)$ for all $\ket{z} \in C(S)$.

Then the stabilizer group
$S^{\langle E \rangle} = \langle s^{\langle E \rangle}_1, \cdots ,s^{\langle E \rangle}_{l-1}\rangle$ with the minimal generating set
\begin{equation}
    \{s^{\langle E \rangle}_1, \cdots, s^{\langle E \rangle}_{l-1}\} = \bigcup_{i\in I^c} \{s_i\}
    \bigcup_{(i,j)\in J} \{s_i s_j\},
\end{equation}
for any $J\subset I^{d,2}$ with $|J|=l-|I^c|-1$,
stabilizes the code space $C(S^{\langle E \rangle}) = C(S) \oplus C(S^E)$.

\end{theorem}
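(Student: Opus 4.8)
The plan is to prove the two inclusions $C(S)\oplus C(S^E)\subseteq C(S^{\langle E\rangle})$ and equality by a dimension count. Write $C\coloneqq C(S)$ and $C^E\coloneqq C(S^E)$, and recall from the construction preceding the theorem that $C^E=\Span{E\ket{z}:\ket{z}\in C}$. First I would record that the sum is genuinely direct, indeed orthogonal. Since $E\ket{z}\notin C$ for all $\ket{z}\in C$, at least one generator must anticommute with $E$: if every $s_i$ commuted with $E$, then $s_iE\ket{z}=Es_i\ket{z}=E\ket{z}$ would force $E\ket{z}\in C$, a contradiction, so $I^d\neq\emptyset$. Fixing any $i_0\in I^d$, the space $C$ lies in the $+1$-eigenspace of $s_{i_0}$ while $C^E$ lies in its $-1$-eigenspace; hence $C\perp C^E$ and $\dim(C\oplus C^E)=2\dim C=2^{\,n-l+1}$.

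Next I would verify that each proposed generator fixes both $C$ and $C^E$ pointwise, which is equivalent to stabilizing $C\oplus C^E$. For $i\in I^c$ the element $s_i$ commutes with $E$ and fixes $C$ (it is an original generator); since $s_iE\ket{z}=Es_i\ket{z}=E\ket{z}$, it also fixes $C^E$. For $(i,j)\in J\subseteq I^{d,2}$ the key computation is $Es_is_j=-s_iEs_j=s_is_jE$, so $s_is_j$ commutes with $E$; moreover $s_is_j\in S$ fixes $C$, and thus $s_is_jE\ket{z}=Es_is_j\ket{z}=E\ket{z}$ shows it fixes $C^E$ as well. As all listed generators lie in $S$, the group $S^{\langle E\rangle}$ is abelian, avoids $-I$, and stabilizes $C\oplus C^E$, giving the inclusion $C\oplus C^E\subseteq C(S^{\langle E\rangle})$.

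For the reverse inclusion I would count dimensions: provided the $l-1$ listed generators are independent, $C(S^{\langle E\rangle})$ has dimension $2^{\,n-(l-1)}=2^{\,n-l+1}=\dim(C\oplus C^E)$, and together with the inclusion above this forces equality. The main obstacle is exactly this independence claim. Passing to the $\mathbb{F}_2$ symplectic representation, the independent generators $s_1,\dots,s_l$ become independent vectors $v_1,\dots,v_l$; the elements $\{s_i:i\in I^c\}$ correspond to $\{v_i:i\in I^c\}$, and each $s_is_j$ to the even-weight vector $v_i+v_j$, which lies in the $(d-1)$-dimensional subspace $W=\{\sum_{k\in I^d}a_kv_k:\sum a_k=0\}$ with $d=|I^d|$. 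Independence of the whole set reduces to the $d-1$ chosen sums forming a basis of $W$. Here I would emphasize that this fails for an arbitrary $J$ of the right cardinality, since e.g. the three sums $v_i+v_j,\ v_j+v_k,\ v_i+v_k$ add to zero; thus the hypothesis should be read as ``for any $J$ making the listed set a minimal generating set.'' A concrete valid choice is a spanning tree on $I^d$, namely $J=\{(i_1,i_2),(i_2,i_3),\dots,(i_{d-1},i_d)\}$ after ordering $I^d=\{i_1,\dots,i_d\}$, whose difference vectors are manifestly independent. With such a $J$ the dimension count closes the argument.
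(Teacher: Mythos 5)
Your proof follows essentially the same route as the paper's: show that every listed generator fixes both $C(S)$ and $C(S^E)$ pointwise, then close the argument with the dimension count $\dim C(S^{\langle E\rangle}) = 2^{n-l}+2^{n-l}=2^{n-(l-1)}$. The paper phrases the second family of generators via the identity $s_is_j = s_i^E s_j^E$ rather than your direct commutator computation $Es_is_j = s_is_jE$, but these are the same observation. The one substantive difference is that you noticed something the paper's proof silently assumes: for the dimension count to apply, the $l-1$ listed elements must actually be independent, and this genuinely fails for an arbitrary $J\subset I^{d,2}$ of cardinality $l-|I^c|-1$ (e.g.\ when $|I^d|=4$ one may take $J=\{(i_1,i_2),(i_2,i_3),(i_1,i_3)\}$, whose three products multiply to the identity, so the generated group is too small and its code space strictly contains $C(S)\oplus C(S^E)$). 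Your reading of the hypothesis as ``any $J$ making the listed set independent,'' together with the explicit spanning-tree choice $J=\{(i_1,i_2),\dots,(i_{d-1},i_d)\}$, is the correct repair; the statement as written, and the paper's proof, should be amended accordingly.
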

\begin{proof}
For all $i\in I^c$ we have that all states in $C(S)$ and $C(S^E)$ are a plus one eigenstate of $s_i$.
For $(i,j)\in I^{d,2}$ we have that $s_is_j = s^E_i s^E_j$ and therefore
all states in $C(S)$ and  $C(S^E)$ are a plus one eigenstates.
Notice that $\operatorname{dim}(C(S^{\langle E \rangle})) = \operatorname{dim}(C(S)) + \operatorname{dim}(C(S^E)) = 2^{n-l}+2^{n-l} = 2^{n-(l-1)}$, and therefore a minimal generating set has $l-1$ elements.
\end{proof}

The relationship $|S^{\langle E \rangle}| = |S|/2$ means that the number of Pauli strings to realize the projector onto $C(S^{\langle E \rangle})$ is halved as compared to $C(S)$.
This inspires the following algorithm to efficiently find a minimal generator set when the code space has an underlying group structure.
In the following we interpret $\{g_1, \cdots ,g_k\} \ket{\phi}$ as $\{g_1\ket{\phi}, \cdots ,g_k\ket{\phi} \}$.

{\centering
\begin{minipage}{1\linewidth}
\begin{algorithm}[H]
\caption{Minimal generator set of stabilizer of subspace $V=\Span{\langle E_1, \cdots, E_k \rangle \ket{z}}$}\label{alg:stabilizerC}
\textbf{Given:} $E_i\in \{I,X\}^n$, computational basis state $\ket{z}$

\textbf{Result:} Minimal generating set $G_k=\{ g_1, \cdots, g_{n-k}\}$ such that $C(\langle g_1, \cdots, g_{n-k}\rangle) = V$
\begin{algorithmic}
    \State $G_0= \{g_{0,1}, \cdots g_{0,n}\}$ with $g_{0,i}= (-1)^{z_i} Z_i$
    \For{$1\leq i \leq k$}
        \State Apply Theorem~\ref{theorem:stabilizersum} with $E=E_i$
               to get $G_i= \{g_{i,1}, \cdots g_{i,n-i}\}$
    \EndFor
\end{algorithmic}
\end{algorithm}
\end{minipage}
}

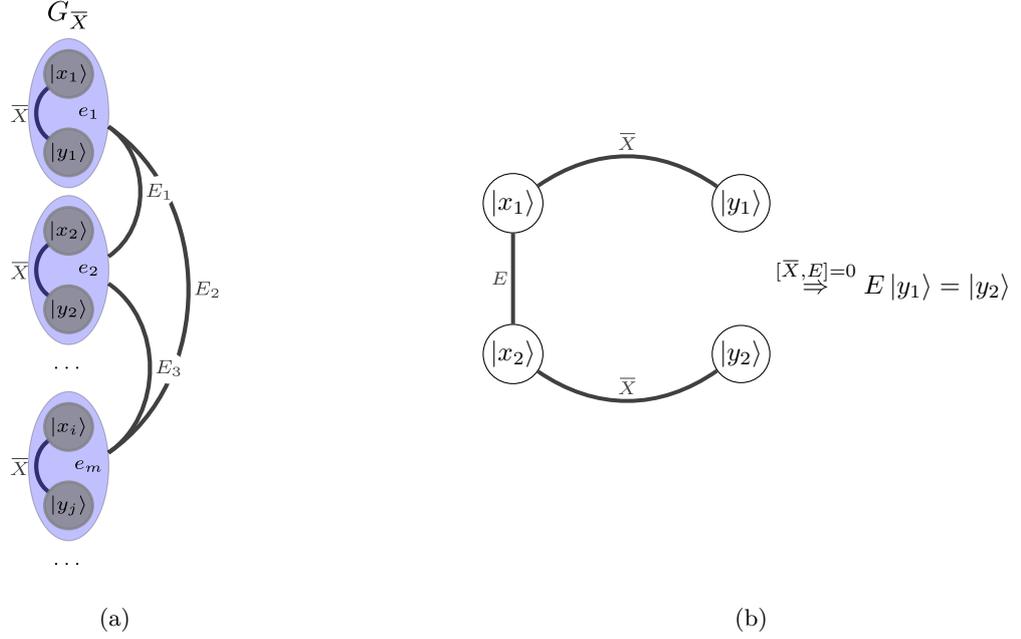
\begin{figure}
     \centering
     \begin{subfigure}[b]{0.42\textwidth}
     \centering
             \begin{tikzpicture}

    \node[] at (-4,.75) {$G_{\lX}$};
    
    \draw[fill=blue, opacity=.25] (-4,-.5) ellipse (.5cm and .95cm);
    \Vertex[style={color=blue, opacity=.0}, x=-3.75,y=-.5  ,label=$e_1$]{e1};
    \draw[fill=blue, opacity=.25] (-4,-2.5) ellipse (.5cm and .95cm);
    \Vertex[style={color=blue, opacity=.0}, x=-3.75,y=-2.5  ,label=$e_2$]{e2};
    \draw[fill=blue, opacity=.25] (-4,-5) ellipse (.5cm and .95cm);
    \Vertex[style={color=blue, opacity=.0}, x=-3.75,y=-5  ,label=$e_m$]{em};
    
    \Vertex[style={color=gray,opacity=.75},x=-4,y=0   ,label=$\ket{x_1}$]{e11};
    \Vertex[style={color=gray,opacity=.75},x=-4,y=-1,  label=$\ket{y_1}$]{e12};
    \Vertex[style={color=gray,opacity=.75},x=-4,y=-2  ,label=$\ket{x_2}$]{e21};
    \Vertex[style={color=gray,opacity=.75},x=-4,y=-3,  label=$\ket{y_2}$]{e22};
    \Vertex[style={color=white,opacity=0}, x=-4,y=-3.75  ,label=$\cdots$]{d};
    \Vertex[style={color=gray,opacity=.75},x=-4,y=-4.5  ,label=$\ket{x_i}$]{e31};
    \Vertex[style={color=gray,opacity=.75},x=-4,y=-5.5  ,label=$\ket{y_j}$]{e32};
    \Vertex[style={color=white,opacity=0}, x=-4,y=-6.25  ,label=$\cdots$]{d};

    \Edge[bend=-55,position=left, label=$\lX$](e11)(e12);
    \Edge[bend=-55,position=left, label=$\lX$](e21)(e22);
    \Edge[bend=-55,position=left, label=$\lX$](e31)(e32);
    
    \Edge[bend=+55,position=right, label=$E_2$](e1)(em);
    \Edge[bend=+55,position=right, label=$E_1$](e1)(e2);
    \Edge[bend=+55,position=right, label=$E_3$](e2)(em);
                
            \end{tikzpicture}
    \caption{
    }
     \end{subfigure}
     \hspace{1cm}
     \begin{subfigure}[b]{0.42\textwidth}
     \centering
         \begin{tikzpicture}
            \node[circle,draw=black, fill=white, inner sep=1pt, align=center] (x1) at (0,2-3.25) {$\ket{x_1}$};
            \node[circle,draw=black, fill=white, inner sep=1pt, align=center] (y1) at (3,2-3.25) {$\ket{y_1}$};
            \Edge[bend=35,label=$\lX$,position=above](x1)(y1)
            \node[circle,draw=black, fill=white, inner sep=1pt, align=center] (x2) at (0,0-3.25) {$\ket{x_2}$};
            \node[circle,draw=black, fill=white, inner sep=1pt, align=center] (y2) at (3,0-3.25) {$\ket{y_2}$};
            \Edge[bend=-35,label=$\lX$,position=above](x2)(y2)
            \Edge[bend=0,label=$E$,position=left](x1)(x2)
            \node[] at (5,1-3.25) {$\overset{[\lX, E]\add{=0}}{\Rightarrow} E\ket{y_1} = \ket{y_2}$};
            \node[] at (0,.75) {};
            \node[] at (0,-6.25) {};
        \end{tikzpicture}
    \caption{
    }
\end{subfigure}
\caption{
For each graph $G_{\lX}$ one can specify operators $\{E_i\}_i$ that map the states from one subspace spanned by the elements of $E_{\lX}$ to another.
(a) An illustration, where individual subspaces of $E_{\lX}$ are highlighted in light blue and $E_i$ map between them.
(b) An operator $E$ maps one subspace onto another, as long as $\lX$ and $E$ commute.
Indeed, we have that $E\ket{y_1} = E \lX \ket{x_1} = \lX E \ket{x_1} = \lX \ket{x_2} = \ket{y_2}$.
}
    \label{fig:Egroup}
\end{figure}

We illustrate Algorithm~\ref{alg:stabilizerC} for $V=\{
\ket{1011},\allowbreak \ket{1100}, \allowbreak
\ket{0111},\allowbreak \ket{0000}, \allowbreak
\ket{1110},\allowbreak \ket{1001}, \allowbreak
\ket{0010},\allowbreak \ket{0101}
\} \allowbreak =$ $ \allowbreak \langle X_2X_3X_4, \allowbreak X_1X_2, \allowbreak X_1X_4 \rangle \ket{1011}$.
In the first step we get that $G_0=\{-Z_1, Z_2, -Z_3, -Z_4\}$ stabilizes the state $\ket{1011}$.
Applying Theorem~\ref{theorem:stabilizersum} recursively
\begin{itemize}
    \item
    with $E=X_2X_3X_4$ gives $G_1=\{-Z_1, -Z_2Z_3, Z_3Z_4\}$,
    \item
    with $E=X_1X_2$ gives $G_2=\{Z_1Z_2Z_3, Z_3Z_4\}$,
    \item
    and with $E=X_1X_4$ gives $G_3=\{Z_1Z_2Z_4\}$.
\end{itemize}
It is easy to check that $C(\langle Z_1Z_2Z_4 \rangle) = \Span{V}$ and therefore the projector to that subspace is given by $\prod\nolimits_{\Span{V}} = \frac{1}{2}(IIII+ZZIZ)$.

\remove{
For every graph $G_{\lX}$ we can then take the set of operators $\{E_i\}_i$ between the subspaces spanned by the elements of $E_{\lX}$ .
In order to apply Algorithm~\ref{alg:stabilizerC}, we need to find (the largest) set of operators that form a group, which enables us to discover the projectors that have the lowest number of Pauli-strings.
}
\add{
\begin{definition}[Stabilizer subgraphs]
\label{def:stabilizersubgraphs}
For every graph $G_{\lX}$ we take the set of operators $\{E_i\}_i$ between the subspaces spanned by the elements of $E_{\lX}$. Note that $E_i \in \{I,X\}^n$ by construction.
Let $F$ be a subset of operators $F = \{ E_{i_1}, \cdots, E_{i_q} \} \subset \{E_i\}_i$.
We call $G_{\lX,F}$ a stabilizer subgraph of $G_{\lX}$ if for an $\ket{x}$ associated with a pair connected by $F$, we have that
\begin{equation}
    e \ket{x} \in V_{\lX}, \quad \forall e \in \langle E_{i_1}, \cdots, E_{i_q}\rangle.
\end{equation}
\end{definition}
An illustration of the set of operators $\{E_i\}_i$ is given in Figure~\ref{fig:Egroup}.
Algorithm~\ref{alg:stabilizerC} can be applied to all stabilizer subgraphs.
The larger the set $F$, the smaller the number of Pauli-strings required to form the projector to the subspace, since the stabilizer group has $2^{n}-|\langle F \rangle|$ elements.
}

\section{Restricting projectors to the feasible subspace}
\label{sec:restrictingprojectors}

We have seen in the previous sections mixers can be written as a sum of projection operators to certain subspaces followed by a logical X operator, i.e., $\lX_a \prod\nolimits_{\Span{V}}$ for some some subspace $V\subseteq V_{\lX}$.
In this section we discuss the possibility to reduce the cost of a mixer further by restricting the projector to the feasible subspace $\Span{B}$.
The essential properties of the projector $P_{V,B}\coloneqq \restr{\prod\nolimits_{\Span{V}}}{\Span{B}}$ that we need to fulfill are that
\begin{subequations}
\label{eq:PCond}
    \begin{align}
\label{eq:PV}
        P_{V,B}\ket{x} &= \ket{x},        & \ket{x} &\in V, \\
\label{eq:PBmVlX}
        P_{V,B}\ket{x} &= \mathbf{0},     & \ket{x} &\in B\setminus V_{\lX}.
    \end{align}
\end{subequations}
Since $\lX \ket{x} \in \Span{V_{\lX}}$ for all $\ket{x}\in V_{\lX}$, it does not matter whether $P_{V,B}$ projects the states in the (possibly empty) set $V_{\lX}\setminus V$ to zero or themselves as long as $P_{V,B}^2=P_{V,B}$.
We present two ways to achieve this goal.

\subsection{Linear algebra approach}
\label{sec:linalg}
From Section~\ref{sec:stabilizer} we saw that the projector can be written using the stabilizer formalism as $\prod\nolimits_{\Span{V}} = \frac{1}{|S|} \sum_{s\in S} s$, where $S$ is a stabilizer group with $C(S)=\Span{V}$.
Let $s_j$ be an enumeration of the elements in the group $S$.
One can now construct the matrix $A: \R^{|B\setminus V_{\lX}| \times |S|} \rightarrow \R$ with entries given by
\begin{equation}
\label{eq:A}
    A_{i,j} = \lambda^{s_j}_i, \quad s_j\in S, \ket{x_i}\in B\setminus V_{\lX},
\end{equation}
where $s_j\ket{x_i} = \lambda^{s_j}_i \ket{x_i}$.
In other words, $ \lambda^{s_j}_i\in \pm 1$ indicates whether $\ket{x_i}$ is in the $\pm 1$ eigenspace of $s_j\in S$.
This gives the possibility to reduce the number of Pauli-strings as follows.
\begin{theorem}[Projectors restricted to a feasible subspace]
\label{theorem:restrproj}
Let $S$ be a stabilizer group with $C(S)=\Span{V}$ and $A$ as defined Equation~\eqref{eq:A}. %
If $v\in\R^{|S|}$ such that $v\in \ker(A)$ and $\mathbf{1}^Tv = \sum_{i=1}^{|S|} v_i \neq 0$, %
then 
\begin{equation}
    P = \frac{1}{\sum_{i=1}^{|S|} v_i} \sum_{i=1}^{|S|} v_i s_i
\end{equation}
fulfills Conditions~\eqref{eq:PCond}.
\end{theorem}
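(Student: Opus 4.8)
The plan is to verify the two defining conditions of Equation~\eqref{eq:PCond} directly, exploiting the fact that $P$ is a real linear combination of the stabilizer elements $s_i$ and that each $s_i$ acts diagonally on every computational basis state. Since the states in $V$ and in $B\setminus V_{\lX}$ are computational basis states, each is a simultaneous eigenvector of every $s_j\in S$ with eigenvalue $\lambda^{s_j}_i=\pm 1$, so I never have to deal with off-diagonal action.

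First I would establish Equation~\eqref{eq:PV}. For $\ket{x}\in V=\operatorname{span}$-generators of $C(S)$, by the definition of the stabilizer group every $s_i$ satisfies $s_i\ket{x}=\ket{x}$. Hence
\begin{equation}
    P\ket{x} = \frac{1}{\sum_{i=1}^{|S|} v_i} \sum_{i=1}^{|S|} v_i s_i \ket{x} = \frac{\sum_{i=1}^{|S|} v_i}{\sum_{i=1}^{|S|} v_i}\ket{x} = \ket{x},
\end{equation}
where the normalization $\sum_i v_i = \mathbf{1}^T v \neq 0$ is exactly what makes this well-defined, which is why that hypothesis appears in the statement.

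Next I would establish Equation~\eqref{eq:PBmVlX}. Fix $\ket{x_i}\in B\setminus V_{\lX}$. Then $s_j\ket{x_i}=\lambda^{s_j}_i\ket{x_i}$ with $\lambda^{s_j}_i = A_{i,j}$ by the definition of $A$ in Equation~\eqref{eq:A}. Therefore
\begin{equation}
    P\ket{x_i} = \frac{1}{\mathbf{1}^T v} \sum_{j=1}^{|S|} v_j \lambda^{s_j}_i \ket{x_i} = \frac{1}{\mathbf{1}^T v}\Big(\sum_{j=1}^{|S|} A_{i,j} v_j\Big)\ket{x_i} = \frac{(Av)_i}{\mathbf{1}^T v}\ket{x_i} = \mathbf{0},
\end{equation}
since $v\in\ker(A)$ forces $(Av)_i=0$ for every such row $i$. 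This is precisely the role of the kernel condition: it simultaneously annihilates all the states of $B$ lying outside $V_{\lX}$.

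The only remaining point is idempotency $P^2=P$, which is needed because the remark preceding the theorem allows $P$ to act freely on the leftover set $V_{\lX}\setminus V$ provided $P$ is a genuine projector. The cleanest route is to argue on the relevant invariant subspace: $P$ is self-adjoint (real coefficients, Hermitian $s_i$) and, restricted to $\operatorname{span}(B)$, it acts as the identity on $V$ and as zero on $B\setminus V_{\lX}$; one checks that on $\operatorname{span}(V_{\lX})$ the operator $\lX\,\prod\nolimits_{\Span{V}}$ together with $P$ stays inside the feasible span, so the conditions \eqref{eq:PCond} pin down $P$ up to its action on $V_{\lX}\setminus V$. I expect the main obstacle to be exactly here: confirming that the freedom on $V_{\lX}\setminus V$ is harmless, i.e.\ that whatever eigenvalues $P$ takes there do not spoil $P^2=P$ or the validity of the mixer. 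I would handle this by noting that the surviving logical X operator $\lX$ maps $V$ into $\Span{V_{\lX}}$, so only the values of $P$ on $V$ (where it is the identity) feed back into the mixer, making the ambiguous action on $V_{\lX}\setminus V$ irrelevant to both idempotency on the used subspace and to Conditions~\eqref{eq:PCond}.
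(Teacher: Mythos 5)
Your verification of the two conditions in Equation~\eqref{eq:PCond} is exactly the paper's own argument: the normalization $\mathbf{1}^Tv\neq 0$ handles $\ket{x}\in V$ via $s_i\ket{x}=\ket{x}$, and the kernel condition gives $(Av)_i=0$ for $\ket{x_i}\in B\setminus V_{\lX}$. Your final paragraph on idempotency addresses something the theorem does not actually claim (it asserts only Conditions~\eqref{eq:PCond}), so that speculative discussion is harmless but unnecessary.
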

\begin{proof}
First, we observe that $P$ is well-defined since $\sum_{i=1}^{|S|} v_i \neq 0$.
Let $\ket{x} \in \Span{V} = C(S)$ then by definition $s_i \ket{x} = \ket{x}$
and we have that
\begin{equation}
   P\ket{x} = \frac{1}{\sum_{i=1}^{|S|} v_i} 
   \sum_{i=1}^{|S|}  v_i s_i \ket{x} 
    = \frac{1}{\sum_{i=1}^{|S|} v_i}
    \sum_{i=1}^{|S|} v_i \ket{x}
    = \frac{1}{\sum_{i=1}^{|S|} v_i}
    \underbrace{\left(\sum_{i=1}^{|S|} v_i\right)}_{\neq 0} \ket{x} = \ket{x},
\end{equation}
so Condition~\eqref{eq:PV} is fulfilled.

If $\ket{x_i} \in B\setminus V_{\lX}$ we have that
\begin{equation}
   (\sum_{i=1}^{|S|} v_i) P\ket{x_i} = 
   \sum_{j=1}^{|S|}  v_j s_j \ket{x_i} =
   \sum_{j=1}^{|S|}  v_j A_{i,j} \ket{x_i} =
   (Av)_i \ket{x_i} = \mathbf{0},
\end{equation}
where the last equality holds because $v\in \ker(A)$.
This shows Condition~\eqref{eq:PBmVlX} is fulfilled.
\end{proof}

Since $P_{V,B}$ is a projector onto $V$ this implies that every row of $A$ has to have equally many plus one and minus one entries.
Therefore, a trivial solution of Theorem~\ref{theorem:restrproj} is $v=\mathbf{1}$.
However, in general there are other solutions, which can be found by bringing $A$ into \emph{reduced row echelon form}.

An example is given for $B=\{\ket{10010},\allowbreak \ket{01010}, \allowdisplaybreaks \ket{10011},\allowbreak \ket{11101},\allowbreak \ket{00110},\allowbreak \ket{01010}\}$ and 
$C_1=\operatorname{span}(\ket{10010},\allowbreak \ket{01110})$, where $A$ becomes the following matrix.
\begin{equation}
\label{eq:matrixA}
            \setlength{\arraycolsep}{3pt}
                \setcounter{MaxMatrixCols}{20}
                A=\begin{pNiceMatrix}[first-row,first-col]
                 &  \rotatebox[]{90}{$+ZIZZZ$}  &  \rotatebox[]{90}{$-ZZIIZ$}  &  \rotatebox[]{90}{$-ZIZIZ$}  &  \rotatebox[]{90}{$+ZZIZI$}  &  \rotatebox[]{90}{$-IZZZI$}  &  \rotatebox[]{90}{$-IZZZZ$}  &  \rotatebox[]{90}{$-ZIZII$}  &  \rotatebox[]{90}{$-IIIZI$}  &  \rotatebox[]{90}{$+ZIZZI$}  &  \rotatebox[]{90}{$+ZZIZZ$}  &  \rotatebox[]{90}{$+IIIIZ$}  &  \rotatebox[]{90}{$+IZZIZ$}  &  \rotatebox[]{90}{$-ZZIII$}  &  \rotatebox[]{90}{$-IIIZZ$}  &  \rotatebox[]{90}{$+IZZII$}  &  \rotatebox[]{90}{$+IIIII$} \\
            \ket{11101} & -1 & +1 & +1 & +1 & -1 & +1 & -1 & -1 & +1 & -1 & -1 & -1 & -1 & +1 & +1 & +1 \\
            \ket{01010} & -1 & +1 & -1 & +1 & -1 & -1 & -1 & +1 & -1 & +1 & +1 & -1 & +1 & +1 & -1 & +1 \\
            \ket{10011} & -1 & -1 & -1 & +1 & +1 & -1 & +1 & +1 & +1 & -1 & -1 & -1 & +1 & -1 & +1 & +1 \\
            \ket{00110} & +1 & -1 & +1 & -1 & -1 & -1 & +1 & +1 & +1 & -1 & +1 & -1 & -1 & +1 & -1 & +1 \\
                \end{pNiceMatrix}
\end{equation}
The rank of $A$ is 4, and therefore the kernel of $A$ has dimension 12, after the rank-nullity theorem. The subspace of the kernel where also $\mathbf{1}^Tv\neq0$ has dimension 10.
Two solutions with the lowest cost are given by $v=
(0, 1, 0, 0, 0, 0, 1, 0, 0, 0, 0, 0, 0, 0, 0, 0)
$ and $v=
(0, 0, 1, 0, 0, 0, 0, 0, 0, 0, 0, 0, 1,\allowbreak 0,\allowbreak 0, 0)
$
with
$H_{XXXII}=XXXII(-ZZIIZ -ZIZII)$
and
$H_{XXXII}=XXXII(-ZIZIZ -ZZIII)$, respectively.
Both reduced mixers bring down the cost from 96 to 10.

\subsection{Stabilizer formalism approach}
\label{sec:stabred}
An alternative approach can be derived using the stabilizer formalism.
Theorem~\ref{theorem:MixerAsX} and Equation~\eqref{eq:XS} gives us a way to construct valid mixers with fewer Pauli terms.
We start by observing that for any $\ket{\psi}\notin C$
there exists at least one $g_i$ in the minimal generating set of $S$ %
with $g_i\ket{\psi} = - \ket{\psi}$.
But since the eigenvalues of Pauli strings are $\pm 1$ and $\prod\nolimits_S \ket{\phi} = \mathbf{0}$ this means that half of the terms in $S$ must be $+1$ and the other half $-1$ eigenvectors of $\ket{\psi}$, otherwise they would not cancel out.
If we can find a subgroup $H<S$ that still has this characteristic for all $\ket{\psi} \in \Span{B}\setminus C$ then we need fewer Pauli-strings to realize $H_{x\leftrightarrow y}$ on the feasible subspace.
In general, we seek the following.
\begin{problem}[Optimal projector for subspace $V=\Span{\langle \lX_1, \cdots, \lX_k \rangle \ket{z}}$]
\label{prob:optimal}
Let $S$ be the stabilizer of the code space $V=\Span{\langle \lX_1, \cdots, \lX_k \rangle \ket{z}}$.
Then we want to find a smallest subgroup $H<S$ such that
\begin{equation}
    \restr{\frac{1}{|H|} \sum_{h \in H} h}{\Span{B}} = \restr{\prod\nolimits_s}{\Span{B}}. %
\end{equation}
Then by construction $P=\frac{1}{|H|} \sum_{h \in H} h$ fulfills Conditions~\eqref{eq:PCond}.
\end{problem}
The next question is how to find such a smallest subgroup.
Let $S=\langle g_1, \cdots, g_{n-k}\rangle$.
One can now construct the matrix $M: \R^{|B\setminus V_{\lX}|\times (n-k) } \rightarrow \R$ with entries given by
\begin{equation}
\label{eq:M}
    M_{i,j} = \lambda^{g_j}_i, \quad g_j\in S, \ket{x_i}\in B\setminus V_{\lX},
\end{equation}
where $g_j\ket{x_i} = \lambda^{g_j}_i \ket{x_i}$.
In other words, $M_{i,j}\in \pm 1$ indicates whether $\ket{x_i}$ is in the $\pm 1$ eigenspace of $g_j$.
Note, that in contrast to $A$ in Equation~\eqref{eq:matrixA}, $g_j$ is only taken from a minimal generating set of the group, not all elements of the group.
Since $ \ket{x_i}\in B\setminus V_{\lX}$ is outside the code space of $S=\langle g_1, \cdots, g_{n-k} \rangle$ is a generating set, this means that there has to be at least one $-1$ entry in each row. Otherwise $\ket{x_i}$ would belong to the code space.

One can create all minimal generating sets of $S$ by successively multiplying one element by another as long as they remain linearly independent, e.g, $\{g_1, \cdots , g_j, \cdots\allowbreak, g_l, \cdots, g_{n-k} \} \rightarrow \{ g_1, \cdots , g_j g_l, \cdots, g_l, \cdots, g_{n-k} \}$.
For the matrix $M$ this changing $g_j$ to $g_j g_l$ means multiplying column $j$ with $l$.
To solve Problem~\ref{prob:optimal} becomes then equivalent to finding the smallest subset of columns that have at least one $-1$ entry in each row, taking into account all possible multiplications of columns.

For the same example that was used in Section~\ref{sec:linalg}%
we have
\begin{equation}
\label{eq:matrixM}
            \setlength{\arraycolsep}{3pt}
                \setcounter{MaxMatrixCols}{10}
                M=\begin{pNiceMatrix}[first-row,first-col]
                 & \rotatebox[]{90}{$g_1=-IIIZI$} & \rotatebox[]{90}{$g_2=-IIIIZ$} & \rotatebox[]{90}{$g_3=-ZZIII$} & \rotatebox[]{90}{$g_4=+IZZII$} \\
            \ket{11101} & +1 & -1 & +1 & +1 \\
            \ket{01010} & -1 & -1 & -1 & +1 \\
            \ket{10011} & +1 & +1 & -1 & -1 \\
            \ket{00110} & +1 & +1 & +1 & -1 \\
                \end{pNiceMatrix}.
\end{equation}
In this case, multiplying columns $g_2$ and $g_4$ gives a column with all entries equal to $-1$. Therefore $H=\langle g_2g_4\rangle = \langle IZZIZ\rangle $ is a solution to Problem~\ref{prob:optimal} leading to
$H_{XXXII}=XXXII(IIIII + IZZIZ)$.
This reduces the cost from 96 to 10.

\subsection{Comparison}
The method in Section~\ref{sec:stabred} is based on a minimal generating set of the stabilizer group, whereas the one presented in Section~\ref{sec:linalg} uses all elements of the group.
Therefore, the matrix $M$ in Equation~\eqref{eq:matrixM} is exponentially larger in the number of group elements than the matrix $A$ in Equation~\eqref{eq:matrixA}.
However, the computational complexity for finding the projector restricted to the feasible subset with the lowest cost seems to be lower for the approach in Section~\ref{sec:linalg}, as it only involves bringing the matrix $A$ into reduced row echelon form.
An optimal solution to Problem~\ref{prob:optimal} seems to be computationally more complex (at least without further investigation).
There seems to be a trade-off between size and complexity.

Furthermore, the method in Section~\ref{sec:linalg} is more general, as it in principle works for any projector $P=\sum_i w_i P_i, P_i\in\{I,Z\}$ and the resulting solution does not need to be a group. This can also be seen in the example above, and it does not seem a-priorily obvious which of the two method gives the better solution.
An example of the reduction in cost is given in Table~\ref{tab:exampleH}.

\section{Numerical results of the algorithm}\label{sec:examples}

In this section we provide examples of the proposed approach.
The overall approach is \replace{described in Algorithm}{sketched in Figure}~\ref{fig:algoverall}, 
which allows to work with minimal generating sets until the last step, where the final mixer is generated.
\add{The overall complexity of the algorithm in Figure~\ref{fig:algoverall} is as follows:
\begin{itemize}
    \item Step (1) generates the family of valid mixers of at most $\mathcal{O}(|B|^2)$ graphs according to Definition~\ref{definition:MixerFamily} by going through all pairs of states in $B$. It should be noted that this is an upper bound: The larger the set $B$, the more likely it is for pairs to be in the same graph.
    \item Step (2) generates stabilizer subgraphs according to Definition~\ref{def:stabilizersubgraphs}. A necessary condition for this is that there are power of two edges, so we can upper bound this by $\sum_{i=0}^{\floor*{\operatorname{log_2}(|B|)}}\binom{|B|}{2^i} \leq \mathcal{O}(2^{|B|})$. This is a very rough upper bound.
    \item Step (3) consists of calculating the cost for all graphs, e.g., by solving Problem~\ref{prob:optimal}. A brute-force solver scales exponentially with $ B\setminus V_{\lX}$ so the runtime can be upper bounded by $\mathcal{O}(2^{|B|})$.
    \item Step (4) consists of finding the combination of stabilizer subgraphs that leads to a connected graph and has minimal cost, see Definition~\ref{def:optimalTrotterization}. A brute-force implementation looking through all possible combinations scales as $2^{|G|}$, where the number of graphs can be upper bounded by $|G| \leq |B|^2 2^{|B|}$.
\end{itemize}
We remark, that heuristics might be developed for better scaling that still lead to good solutions.
}

Figure~\ref{fig:statistics} shows how the proposed approach compares for randomly selected $B$.
We observe that the proposed method reduces \add{the cost of} the mixer that connects states linearly as in a chain.
This corresponds to a $T$ in Equation~\ref{eq:Hmdefinition} that has 1 in the first off-diagonal entry and 0 in in the rest of the entries.
The Figure shows that finding the optimal Trotterization reduces the cost (red line vs. blue line), particularly when $|B|$ is large. For the case when $|B|=2^n$, the proposed algorithm reproduces the standard $X$-mixer which has zero cost.
Figure~\ref{fig:X4} shows the optimal graph for the $X$-mixer, i.e., the unconstrained case.
Furthermore, the approach to restrict the projector to the feasible subspace, as described in Section~\ref{sec:restrictingprojectors}, can further reduce the required number of CNOT gates in a wide range of cases.

\begin{figure}
    \centering
    \tikzstyle{blocktall} = [rectangle, draw, fill=blue!20, minimum width=2cm, text width=4.2cm, text centered, rounded corners, minimum height=5.5cm,inner sep=2pt, outer sep=1pt]
\tikzstyle{block} = [rectangle, draw, fill=blue!20, minimum width=2cm, text width=4.2cm, text centered, rounded corners, minimum height=.75cm,inner sep=2pt, outer sep=1pt]
\tikzstyle{circ} = [circle,fill=lightgray,draw,inner sep=0pt, outer sep=0pt,minimum size=4pt]
\tikzstyle{line} = [draw, -latex']

\newcommand{\drawNodes}[3]{
    \node[] at (#2,8.75) {$G_{#3}$};
    \node[circ] at (#2,7.5) (z1) {$z_1$};
    \node[circ] at (#2,6.5) (z2) {$z_2$};
    \node[circ] at (#2,5.5) (z3) {$z_3$};
    \node[circ] at (#2,4.5) (z4) {$z_4$};
    \node[circ] at (#2,3.5) (z5) {$z_5$};
    \node[circ] at (#2,2.5) (z6) {$z_6$};
    \node[] at (#2,1.75) (zd) {$\vdots$};
    \node[circ] at (#2,.5) (zJ) {$z_J$};
}

\newsavebox\graphbox
\begin{lrbox}{\graphbox}
        \begin{tikzpicture}[scale=.48]
            \def\dist{2}
            \drawNodes{0}{0*\dist}{\lX_1}
            \draw [-] (z2) to [bend right=45]  node[midway,left] {$E_1$} (z4);
            \draw [-] (z6) to [bend right=45]  node[midway,left] {$E_k$}  (0*\dist-.25, 1);
            
            \node[] at (1*\dist,8.75) {$\cdots$};
            
            \drawNodes{0}{2*\dist}{\lX_m}
            \draw [-] (z1) to [bend right=45]  (z3);
            \draw [-] (z5) to [bend right=45]  (zJ);
        \end{tikzpicture}
\end{lrbox}

\newcommand{\drawsubNodesA}[3]{
    \def\xm{9}
    \def\xd{0.5}
    \def\sd{2}
    
    \node[] at (#2-\sd,0-.25) {$\vdots$};
    \foreach \x in {1,...,\xm}
        \node[circ] at (#2-\sd,\x*\xd-.25) (zm\x) {};
        
    \node[] at (#2+\sd,0-.25) {$\vdots$};
    \foreach \x in {1,...,\xm}
        \node[circ] at (#2+\sd,\x*\xd-.25) (zp\x) {};

    \node[] at (#2,\xm*\xd+2.5) (G) {$G_{#3}$};
    \node[] at (#2-\sd,\xm*\xd+.5) (Gm) {$G_{#3,1}$};
    \node[] at (#2+\sd,\xm*\xd+.5) (Gp) {$G_{#3,p_i}$};
    \node[] at (#2,\xm*\xd+.5) {$\cdots$};
    \draw [->] (G) to  (Gm);
    \draw [->] (G) to  (Gp);
}

\newsavebox\subgraphbox
\begin{lrbox}{\subgraphbox}
        \begin{tikzpicture}[scale=.48, anchor=center]
            \def\dist{2}
            \drawsubNodesA{0}{0}{\lX_i}
            \draw [-] (zm9) to [bend right=90]  (zm8);
            \draw [-] (zp9) to [bend right=90]  (zp8);
            \draw [-] (zp5) to [bend right=90]  (zp1);
            \draw [-] (zp6) to [bend right=90]  (zp3);
            \draw [-] (zp4) to [bend right=90]  (zp2);
        \end{tikzpicture}
\end{lrbox}

\begin{tikzpicture}%
    \node [block] (init) {\textbf{Input:}\\ $B = \{ \ket{z_1}, \cdots, \ket{z_J}\}$};
    \node [blocktall, below=.75cm of init] (graphs) {
        \begin{minipage}{\linewidth}
        \textbf{(1)} Family of valid mixers:
        \end{minipage}
        \\ \usebox\graphbox };
    \node [blocktall, right=.75cm of graphs] (subgraphs) {
        \begin{minipage}{\linewidth}
        \textbf{(2)} 
        Find stabilizer sub-
        \phantom{\textbf{(2)}}
        graphs for all $i$:
        \end{minipage}
        \\ \usebox\subgraphbox };
    \node [block, right=.75cm of subgraphs, yshift=-0cm] (cost) {
        \begin{minipage}{\linewidth}
        \textbf{(3)} Calculate cost for all
        \phantom{\textbf{(3)}}
        stabilizer sub-graphs \\
        \phantom{\textbf{(3)}}
        $G_{\lX_i,j}$
        \end{minipage}
        };
    \node [block] (out) at (init -| cost){ \textbf{Output:}\\Optimal LX-mixer};
    \node [block] (connect) at ($(cost)!0.5!(out)$) {
        \begin{minipage}{\linewidth}
        \textbf{(4)} Find connected graph
        \phantom{\textbf{(4)}}
        composed by $G_{\lX_i,j}$ \\
        \phantom{\textbf{(4)}}
        with minimal cost
        \end{minipage}
        };
    
    \draw[->] (init) to (graphs);
    \draw[->] (graphs) to (subgraphs);
    \draw[->] ($(subgraphs.east) + (0,-0cm)$) to (cost);
    \draw[->] (cost) to (connect);
    \draw[->] (connect) to (out);
\end{tikzpicture}
    \caption{The steps of the overall algorithm. An optimal mixer is found by generating all possible stabilizer subgraphs and finding the connected graph composed by a subset of these subgraphs that has minimal cost.}
    \label{fig:algoverall}
\end{figure}

\begin{figure}
    \centering
    \begin{subfigure}[b]{0.45\textwidth}
    \begin{tikzpicture}

\definecolor{darkgray176}{RGB}{176,176,176}
\definecolor{goldenrod1911910}{RGB}{191,191,0}
\definecolor{green}{RGB}{0,128,0}
\definecolor{green01270}{RGB}{0,127,0}
\definecolor{lightgray204}{RGB}{204,204,204}
\definecolor{yellow}{RGB}{255,255,0}

\begin{axis}[
legend cell align={left},
legend style={
  fill opacity=0.8,
  draw opacity=1,
  text opacity=1,
  at={(1.185,.98)},
  anchor=north west,
  draw=lightgray204
},
tick align=outside,
tick pos=left,
x grid style={darkgray176},
xlabel={\(\displaystyle |B|\)},
xmin=2, xmax=8,
xtick style={color=black},
y grid style={darkgray176},
ylabel={\#CNOTS},
ymin=., ymax=82,
ytick style={color=black},
title={$n=3$}
]

\path [draw=red, fill=red, opacity=0.35]
(axis cs:2,13.628261106855)
--(axis cs:2,8.05173889314505)
--(axis cs:3,17.6076277056389)
--(axis cs:4,28.1245895715788)
--(axis cs:5,37.7857384988589)
--(axis cs:6,49.0410435966406)
--(axis cs:7,60.2881155249713)
--(axis cs:8,72.080800080016)
--(axis cs:8,80.079199919984)
--(axis cs:8,80.079199919984)
--(axis cs:7,69.1518844750287)
--(axis cs:6,58.2389564033594)
--(axis cs:5,46.2142615011411)
--(axis cs:4,36.8354104284212)
--(axis cs:3,24.3123722943611)
--(axis cs:2,13.628261106855)
--cycle;

\path [draw=yellow, fill=yellow, opacity=0.35]
(axis cs:2,2.81413055342748)
--(axis cs:2,0.025869446572524)
--(axis cs:3,4.84646958760843)
--(axis cs:4,9.85171273060906)
--(axis cs:5,22.6319340664317)
--(axis cs:6,41.263461268284)
--(axis cs:7,60.2881155249713)
--(axis cs:8,72.080800080016)
--(axis cs:8,80.079199919984)
--(axis cs:8,80.079199919984)
--(axis cs:7,69.1518844750287)
--(axis cs:6,53.856538731716)
--(axis cs:5,31.3680659335683)
--(axis cs:4,17.1482872693909)
--(axis cs:3,10.0335304123916)
--(axis cs:2,2.81413055342748)
--cycle;

\path [draw=blue, fill=blue, opacity=0.35]
(axis cs:2,13.628261106855)
--(axis cs:2,8.05173889314505)
--(axis cs:3,16.0436133778645)
--(axis cs:4,15.02269381711)
--(axis cs:5,11.9488792270697)
--(axis cs:6,5.71376013954648)
--(axis cs:7,6)
--(axis cs:8,0)
--(axis cs:8,0)
--(axis cs:8,0)
--(axis cs:7,6)
--(axis cs:6,9.56623986045352)
--(axis cs:5,17.5711207729303)
--(axis cs:4,30.01730618289)
--(axis cs:3,21.7963866221355)
--(axis cs:2,13.628261106855)
--cycle;

\path [draw=green, fill=green, opacity=0.35]
(axis cs:2,2.81413055342748)
--(axis cs:2,0.025869446572524)
--(axis cs:3,4.29969699288215)
--(axis cs:4,4.66744873268133)
--(axis cs:5,11.9488792270697)
--(axis cs:6,4.00638419528068)
--(axis cs:7,6)
--(axis cs:8,0)
--(axis cs:8,0)
--(axis cs:8,0)
--(axis cs:7,6)
--(axis cs:6,9.51361580471931)
--(axis cs:5,17.5711207729303)
--(axis cs:4,11.3725512673187)
--(axis cs:3,8.26030300711785)
--(axis cs:2,2.81413055342748)
--cycle;

\addplot [semithick, red, mark=+, mark size=3, mark options={solid}]
table {%
2 10.84
3 20.96
4 32.48
5 42
6 53.64
7 64.72
8 76.08
};
\addlegendentry{$\mu\pm\sigma$, chain}

\addplot [semithick, goldenrod1911910, mark=star, mark size=3, mark options={solid}]
table {%
2 1.42
3 7.44
4 13.5
5 27
6 47.56
7 64.72
8 76.08
};
\addlegendentry{$\mu\pm\sigma$, chain restr.}

\addplot [semithick, blue, mark=x, mark size=3, mark options={solid}]
table {%
2 10.84
3 18.92
4 22.52
5 14.76
6 7.64
7 6
8 0
};
\addlegendentry{$\mu\pm\sigma$, optimal}

\addplot [semithick, green01270, mark=*, mark size=3, mark options={solid}]
table {%
2 1.42 
3 6.28 
4 8.02 
5 14.76 
6 6.76 
7 6
8 0
};
\addlegendentry{$\mu\pm\sigma$, optimal restr.}

\addplot [
    black, 
    fill=black, 
    opacity=0.15,
    mark=*, mark size=3, 
    mark options={solid}
]
coordinates {
    (2,4)
    (2,0)
    (3,4)
    (4,0)
    (5,12)
    (6,4)
    (7,6)
    (8,0)
    (8,0)
    (8,0)
    (7,6)
    (6,12)
    (5,20)
    (4,12)
    (3,8)
    (2,4)
};

\addplot [
    black, 
    fill=black, 
    dotted, 
    opacity=0.15,
    mark=x, mark size=3, 
    mark options={solid}
]
coordinates {
    (2,16)
    (2,8)
    (3,16)
    (4,4)
    (5,12)
    (6,6)
    (7,6)
    (8,0)
    (8,0)
    (8,0)
    (7,6)
    (6,12)
    (5,20)
    (4,28)
    (3,24)
    (2,16)
};

\addplot [
    black, 
    fill=black, 
    dotted, 
    opacity=0.15,
    mark=+, mark size=3, 
    mark options={solid}
]
coordinates {
    (2,16)
    (2,8)
    (3,16)
    (4,24)
    (5,36)
    (6,48)
    (7,60)
    (8,72)
    (8,80)
    (8,80)
    (7,76)
    (6,64)
    (5,52)
    (4,44)
    (3,28)
    (2,16)
};

\addplot [
    black, 
    fill=black, 
    dotted, 
    opacity=0.15,
    mark=star, mark size=3, 
    mark options={solid}
]
coordinates {
    (2,4)
    (2,0)
    (3,4)
    (4,6)
    (5,18)
    (6,32)
    (7,60)
    (8,72)
    (8,80)
    (8,80)
    (7,76)
    (6,64)
    (5,36)
    (4,22)
    (3,14)
    (2,4)
};

\end{axis}

\end{tikzpicture}
    \end{subfigure}
    \hspace{0.5cm}
    \begin{subfigure}[b]{0.45\textwidth}
    \begin{tikzpicture}

\definecolor{darkgray176}{RGB}{176,176,176}
\definecolor{goldenrod1911910}{RGB}{191,191,0}
\definecolor{green}{RGB}{0,128,0}
\definecolor{green01270}{RGB}{0,127,0}
\definecolor{lightgray204}{RGB}{204,204,204}
\definecolor{yellow}{RGB}{255,255,0}

\begin{axis}[
tick align=outside,
tick pos=left,
x grid style={darkgray176},
xlabel={\(\displaystyle |B|\)},
xmin=2, xmax=16,
xtick style={color=black},
y grid style={darkgray176},
ymin=-0, ymax=520,
ytick style={color=black},
title={$n=4$}
]

\path [draw=red, fill=red, opacity=0.35]
(axis cs:2,41.4025693713918)
--(axis cs:2,27.0774306286082)
--(axis cs:3,55.5750324675872)
--(axis cs:4,83.6143329997839)
--(axis cs:5,121.660455382849)
--(axis cs:6,153.184622963246)
--(axis cs:7,185.194225713315)
--(axis cs:8,218.578919259613)
--(axis cs:9,253.586165384931)
--(axis cs:10,290.402474424058)
--(axis cs:11,321.50239480306)
--(axis cs:12,356.813169130432)
--(axis cs:13,390.480781264902)
--(axis cs:14,429.026621759589)
--(axis cs:15,463.945014813942)
--(axis cs:16,492.362066380846)
--(axis cs:16,533.397933619154)
--(axis cs:16,533.397933619154)
--(axis cs:15,506.134985186058)
--(axis cs:14,472.573378240411)
--(axis cs:13,431.439218735098)
--(axis cs:12,394.706830869568)
--(axis cs:11,360.41760519694)
--(axis cs:10,321.437525575942)
--(axis cs:9,291.533834615069)
--(axis cs:8,253.741080740387)
--(axis cs:7,215.125774286685)
--(axis cs:6,184.415377036755)
--(axis cs:5,143.619544617151)
--(axis cs:4,107.265667000216)
--(axis cs:3,74.8249675324128)
--(axis cs:2,41.4025693713918)
--cycle;

\path [draw=yellow, fill=yellow, opacity=0.35]
(axis cs:2,4.35064234284795)
--(axis cs:2,0.769357657152049)
--(axis cs:3,6.35447981814378)
--(axis cs:4,11.3215002344483)
--(axis cs:5,22.4656441684057)
--(axis cs:6,41.989070910251)
--(axis cs:7,67.8262404585589)
--(axis cs:8,88.3810167097374)
--(axis cs:9,114.422892944016)
--(axis cs:10,145.664102728997)
--(axis cs:11,186.598903080409)
--(axis cs:12,258.803176481999)
--(axis cs:13,331.877500107305)
--(axis cs:14,406.883658791668)
--(axis cs:15,463.945014813942)
--(axis cs:16,492.362066380846)
--(axis cs:16,533.397933619154)
--(axis cs:16,533.397933619154)
--(axis cs:15,506.134985186058)
--(axis cs:14,459.356341208332)
--(axis cs:13,390.122499892695)
--(axis cs:12,318.236823518001)
--(axis cs:11,249.321096919591)
--(axis cs:10,185.535897271003)
--(axis cs:9,144.977107055984)
--(axis cs:8,112.378983290263)
--(axis cs:7,89.9337595414411)
--(axis cs:6,65.130929089749)
--(axis cs:5,35.8543558315943)
--(axis cs:4,20.9184997655517)
--(axis cs:3,14.7655201818562)
--(axis cs:2,4.35064234284795)
--cycle;

\path [draw=blue, fill=blue, opacity=0.35]
(axis cs:2,41.4025693713918)
--(axis cs:2,27.0774306286082)
--(axis cs:3,50.8142429604521)
--(axis cs:4,60.4067592947507)
--(axis cs:5,62.4430377674442)
--(axis cs:6,61.0885294816556)
--(axis cs:7,47.0842618567077)
--(axis cs:8,36.4032764404667)
--(axis cs:9,36.2280422434254)
--(axis cs:10,38.6050554810409)
--(axis cs:11,28.5124196594564)
--(axis cs:12,20.9948651889539)
--(axis cs:13,14.8480409959152)
--(axis cs:14,8.07222642852984)
--(axis cs:15,8)
--(axis cs:16,0)
--(axis cs:16,0)
--(axis cs:16,0)
--(axis cs:15,8)
--(axis cs:14,11.8877735714702)
--(axis cs:13,20.3119590040848)
--(axis cs:12,35.6451348110461)
--(axis cs:11,47.7275803405436)
--(axis cs:10,53.4749445189591)
--(axis cs:9,47.2119577565746)
--(axis cs:8,54.9567235595333)
--(axis cs:7,65.4757381432922)
--(axis cs:6,92.0314705183444)
--(axis cs:5,115.636962232556)
--(axis cs:4,95.4332407052493)
--(axis cs:3,67.4257570395479)
--(axis cs:2,41.4025693713918)
--cycle;

\path [draw=green, fill=green, opacity=0.35]
(axis cs:2,4.35064234284795)
--(axis cs:2,0.769357657152049)
--(axis cs:3,5.43948056328158)
--(axis cs:4,7.80841283665986)
--(axis cs:5,14.906796751063)
--(axis cs:6,18.8321918429495)
--(axis cs:7,17.6741324499133)
--(axis cs:8,20.2303559762716)
--(axis cs:9,26.7980072140203)
--(axis cs:10,25.6835549032153)
--(axis cs:11,20.2132215065865)
--(axis cs:12,11.4643229102421)
--(axis cs:13,14.6589132574292)
--(axis cs:14,6.72363887147455)
--(axis cs:15,8)
--(axis cs:16,0)
--(axis cs:16,0)
--(axis cs:16,0)
--(axis cs:15,8)
--(axis cs:14,11.9963611285255)
--(axis cs:13,20.1810867425708)
--(axis cs:12,21.1756770897579)
--(axis cs:11,30.5467784934135)
--(axis cs:10,33.5564450967847)
--(axis cs:9,38.4419927859797)
--(axis cs:8,38.4896440237284)
--(axis cs:7,39.0858675500867)
--(axis cs:6,30.0078081570505)
--(axis cs:5,20.373203248937)
--(axis cs:4,14.5515871633401)
--(axis cs:3,11.6005194367184)
--(axis cs:2,4.35064234284795)
--cycle;

\addplot [semithick, red, mark=+, mark size=3, mark options={solid}]
table {%
2 34.24
3 65.2
4 95.44
5 132.64
6 168.8
7 200.16
8 236.16
9 272.56
10 305.92
11 340.96
12 375.76
13 410.96
14 450.8
15 485.04
16 512.88
};

\addplot [semithick, goldenrod1911910, mark=star, mark size=3, mark options={solid}]
table {%
2 2.56
3 10.56
4 16.12
5 29.16
6 53.56
7 78.88
8 100.38
9 129.7
10 165.6
11 217.96
12 288.52
13 361
14 433.12
15 485.04
16 512.88
};

\addplot [semithick, blue, mark=x, mark size=3, mark options={solid}]
table {%
2 34.24
3 59.12
4 77.92
5 89.04
6 76.56
7 56.28
8 45.68
9 41.72
10 46.04
11 38.12
12 28.32
13 17.58
14 9.98
15 8
16 0
};

\addplot [semithick, green01270, mark=*, mark size=3, mark options={solid}]
table {%
2 2.56
3 8.52
4 11.18
5 17.64
6 24.42
7 28.38
8 29.36
9 32.62
10 29.62
11 25.38
12 16.32
13 17.42
14 9.36
15 8
16 0
};

\addplot [
    black, 
    fill=black, 
    opacity=0.15,
    mark=*, mark size=3, 
    mark options={solid}
]
coordinates {
    (2,6)
    (2,0)
    (3,4)
    (4,0)
    (5,12)
    (6,6)
    (7,6)
    (8,4)
    (9,22)
    (10,22)
    (11,16)
    (12,4)
    (13,14)
    (14,6)
    (15,8)
    (16,0)
    (16,0)
    (16,0)
    (15,8)
    (14,16)
    (13,22)
    (12,24)
    (11,34)
    (10,40)
    (9,48)
    (8,40)
    (7,48)
    (6,32)
    (5,24)
    (4,16)
    (3,12)
    (2,6)
};

\addplot [
    black, 
    fill=black, 
    opacity=0.15,
    mark=x, mark size=3, 
    mark options={solid}
]
coordinates {
    (2,48)
    (2,24)
    (3,48)
    (4,16)
    (5,40)
    (6,36)
    (7,24)
    (8,18)
    (9,32)
    (10,24)
    (11,20)
    (12,8)
    (13,14)
    (14,8)
    (15,8)
    (16,0)
    (16,0)
    (16,0)
    (15,8)
    (14,16)
    (13,22)
    (12,40)
    (11,52)
    (10,64)
    (9,56)
    (8,72)
    (7,72)
    (6,128)
    (5,136)
    (4,104)
    (3,72)
    (2,48)
};

\addplot [
    black, 
    fill=black, 
    opacity=0.15,
    mark=+, mark size=3, 
    mark options={solid}
]
coordinates {
    (2,48)
    (2,24)
    (3,48)
    (4,72)
    (5,104)
    (6,136)
    (7,168)
    (8,200)
    (9,232)
    (10,264)
    (11,296)
    (12,336)
    (13,360)
    (14,384)
    (15,440)
    (16,472)
    (16,576)
    (16,576)
    (15,528)
    (14,512)
    (13,464)
    (12,416)
    (11,384)
    (10,336)
    (9,328)
    (8,272)
    (7,232)
    (6,208)
    (5,160)
    (4,120)
    (3,88)
    (2,48)
};

\addplot [
    black, 
    fill=black, 
    opacity=0.15,
    mark=star, mark size=3, 
    mark options={solid}
]
coordinates {
    (2,6)
    (2,0)
    (3,4)
    (4,8)
    (5,14)
    (6,30)
    (7,54)
    (8,70)
    (9,98)
    (10,128)
    (11,156)
    (12,228)
    (13,288)
    (14,340)
    (15,440)
    (16,472)
    (16,576)
    (16,576)
    (15,528)
    (14,512)
    (13,448)
    (12,368)
    (11,300)
    (10,240)
    (9,172)
    (8,128)
    (7,108)
    (6,82)
    (5,54)
    (4,30)
    (3,22)
    (2,6)
};

\end{axis}

\end{tikzpicture}
    \end{subfigure}
    \caption{Statistics of the number of controlled not gates required depending on the number of feasible states, i.e. $|B|$. For each $|B|$ 100 randomly selected states are drawn all computational basis states of $(C^2)^{\otimes n}$ and the corresponding restricted mixers are generated.
    The mean and one standard deviation of the cost is plotted in color.
    The gray shaded area is the range between the minimum and maximum obtained.
    We can see that the mixer corresponding to a linear chain of consecutive states is easily outperformed by finding an optimal Trotterization.
    We observe also that restricting the mixers to the feasible subspace reduces the cost.%
    }
    \label{fig:statistics}
\end{figure}
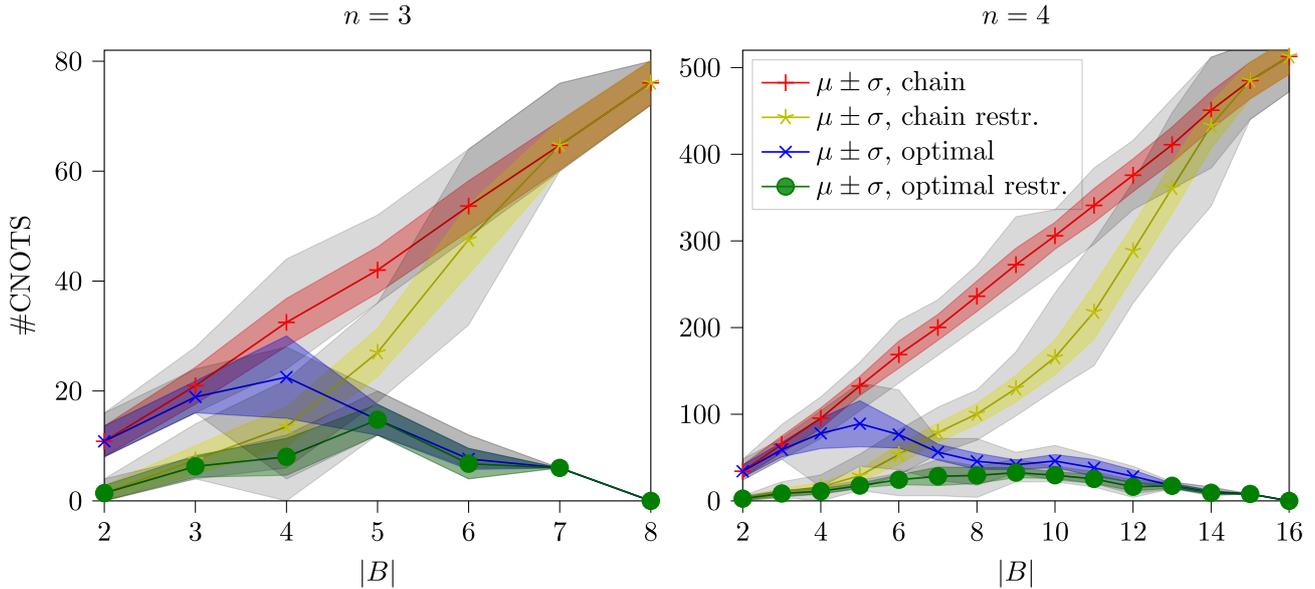

\begin{table}
\setlength{\tabcolsep}{3pt}
    \centering
            \begin{tabular}{rrrrr}
                \toprule
                 & standard & \add{ref.}\cite{fuchs2022constraint}& \replace{optimal}{LX-Mixer}\\
                \midrule
                $C_1=\{\ket{10010},\ket{01110}\}$ & 96  & 40 & 10 \\
                $C_2=\{\ket{10010},\ket{10011}\}$ & 64  & 24 & 4  \\
                $C_3=\{\ket{10010},\ket{11101}\}$ & 112 & 48 & 14 \\
                $C_4=\{\ket{10010},\ket{00110}\}$ & 80  & 32 & 10 \\
                $C_5=\{\ket{10010},\ket{01010}\}$ & 80  & 32 & 10 \\
                $C_6=\{\ket{01110},\ket{10011}\}$ & 112 & 48 & 14 \\
                $C_7=\{\ket{01110},\ket{11101}\}$ & 96  & 48 & 12 \\
                $C_8=\{\ket{01110},\ket{00110}\}$ & 64  & 24 & 4  \\
                $C_9=\{\ket{01110},\ket{01010}\}$ & 64  & 24 & 4  \\
                $C_{10}=\{\ket{10011},\ket{11101}\}$ & 96  & 40 & 10 \\
                $C_{11}=\{\ket{10011},\ket{00110}\}$ & 96  & 40 & 10 \\
                $C_{12}=\{\ket{10011},\ket{01010}\}$ & 96  & 40 & 10 \\
                $C_{13}=\{\ket{11101},\ket{00110}\}$ & 112 & 48 & 12  \\
                $C_{14}=\{\ket{11101},\ket{01010}\}$ & 112 & 48 & 12  \\
                $C_{15}=\{\ket{00110},\ket{01010}\}$ & 80  & 32  & 4  \\
                \bottomrule
            \end{tabular}
            \caption{
            Comparing the cost of the mixers for $B=\{\ket{10010},\allowbreak \ket{01010}, \allowdisplaybreaks \ket{10011},\allowbreak \ket{11101},\allowbreak \ket{00110},\allowbreak \ket{01010}\}$.
            When compared to the unrestricted mixer or the brute-force method in Fuchs et al.~\cite{fuchs2022constraint}, the cost can be considerably reduced with the approach described in Section~\ref{sec:restrictingprojectors}.
            }
    \label{tab:exampleH}
\end{table}

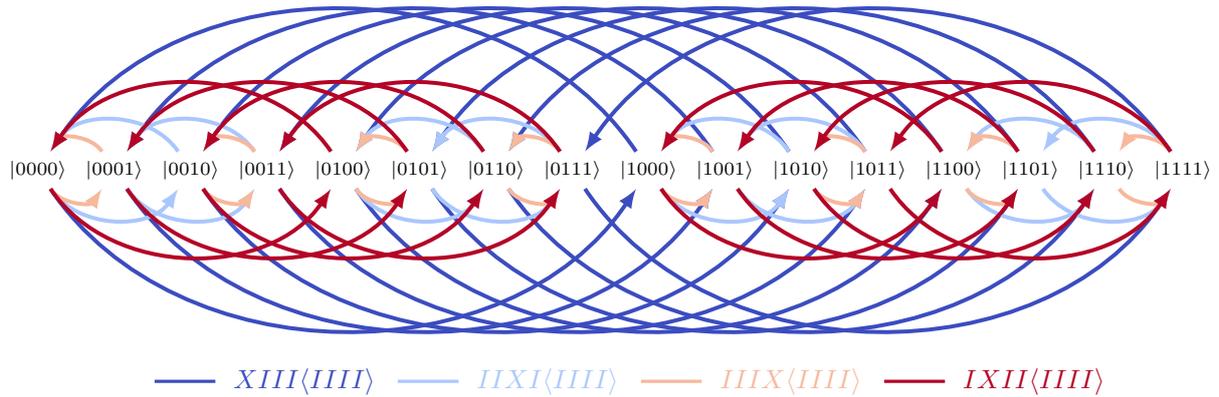
\begin{figure}
    \centering
    \begin{tikzpicture}[scale=4]
\node[color=black,opacity=1] (1100) at (4.695/1.5,0) {$\scriptstyle\ket{1100}$};
\node[color=black,opacity=1] (0111) at (2.812/1.5,0) {$\scriptstyle\ket{0111}$};
\node[color=black,opacity=1] (1001) at (3.565/1.5,0) {$\scriptstyle\ket{1001}$};
\node[color=black,opacity=1] (0100) at (1.682/1.5,0) {$\scriptstyle\ket{0100}$};
\node[color=black,opacity=1] (1110) at (5.448/1.5,0) {$\scriptstyle\ket{1110}$};
\node[color=black,opacity=1] (1010) at (3.942/1.5,0) {$\scriptstyle\ket{1010}$};
\node[color=black,opacity=1] (1101) at (5.072/1.5,0) {$\scriptstyle\ket{1101}$};
\node[color=black,opacity=1] (1011) at (4.318/1.5,0) {$\scriptstyle\ket{1011}$};
\node[color=black,opacity=1] (0101) at (2.058/1.5,0) {$\scriptstyle\ket{0101}$};
\node[color=black,opacity=1] (0000) at (0.175/1.5,0) {$\scriptstyle\ket{0000}$};
\node[color=black,opacity=1] (1000) at (3.188/1.5,0) {$\scriptstyle\ket{1000}$};
\node[color=black,opacity=1] (0010) at (0.928/1.5,0) {$\scriptstyle\ket{0010}$};
\node[color=black,opacity=1] (0011) at (1.305/1.5,0) {$\scriptstyle\ket{0011}$};
\node[color=black,opacity=1] (0001) at (0.552/1.5,0) {$\scriptstyle\ket{0001}$};
\node[color=black,opacity=1] (1111) at (5.825/1.5,0) {$\scriptstyle\ket{1111}$};
\node[color=black,opacity=1] (0110) at (2.435/1.5,0) {$\scriptstyle\ket{0110}$};
\Edge[,color={58.6004535,76.17308133,192.189204015},bend=-56.31,Direct,RGB](1100)(0100)
\Edge[,color={58.6004535,76.17308133,192.189204015},bend=-56.31,Direct,RGB](0111)(1111)
\Edge[,color={58.6004535,76.17308133,192.189204015},bend=-56.31,Direct,RGB](1001)(0001)
\Edge[,color={58.6004535,76.17308133,192.189204015},bend=-56.31,Direct,RGB](0100)(1100)
\Edge[,color={58.6004535,76.17308133,192.189204015},bend=-56.31,Direct,RGB](1110)(0110)
\Edge[,color={58.6004535,76.17308133,192.189204015},bend=-56.31,Direct,RGB](1010)(0010)
\Edge[,color={58.6004535,76.17308133,192.189204015},bend=-56.31,Direct,RGB](1101)(0101)
\Edge[,color={58.6004535,76.17308133,192.189204015},bend=-56.31,Direct,RGB](1011)(0011)
\Edge[,color={58.6004535,76.17308133,192.189204015},bend=-56.31,Direct,RGB](0101)(1101)
\Edge[,color={58.6004535,76.17308133,192.189204015},bend=-56.31,Direct,RGB](0000)(1000)
\Edge[,color={58.6004535,76.17308133,192.189204015},bend=-56.31,Direct,RGB](1000)(0000)
\Edge[,color={58.6004535,76.17308133,192.189204015},bend=-56.31,Direct,RGB](0010)(1010)
\Edge[,color={58.6004535,76.17308133,192.189204015},bend=-56.31,Direct,RGB](0011)(1011)
\Edge[,color={58.6004535,76.17308133,192.189204015},bend=-56.31,Direct,RGB](0001)(1001)
\Edge[,color={58.6004535,76.17308133,192.189204015},bend=-56.31,Direct,RGB](1111)(0111)
\Edge[,color={58.6004535,76.17308133,192.189204015},bend=-56.31,Direct,RGB](0110)(1110)
\Edge[,color={170.14949570500002,198.68999653499998,253.204599315},bend=-56.31,Direct,RGB](1100)(1110)
\Edge[,color={170.14949570500002,198.68999653499998,253.204599315},bend=-56.31,Direct,RGB](0111)(0101)
\Edge[,color={170.14949570500002,198.68999653499998,253.204599315},bend=-56.31,Direct,RGB](1001)(1011)
\Edge[,color={170.14949570500002,198.68999653499998,253.204599315},bend=-56.31,Direct,RGB](0100)(0110)
\Edge[,color={170.14949570500002,198.68999653499998,253.204599315},bend=-56.31,Direct,RGB](1110)(1100)
\Edge[,color={170.14949570500002,198.68999653499998,253.204599315},bend=-56.31,Direct,RGB](1010)(1000)
\Edge[,color={170.14949570500002,198.68999653499998,253.204599315},bend=-56.31,Direct,RGB](1101)(1111)
\Edge[,color={170.14949570500002,198.68999653499998,253.204599315},bend=-56.31,Direct,RGB](1011)(1001)
\Edge[,color={170.14949570500002,198.68999653499998,253.204599315},bend=-56.31,Direct,RGB](0101)(0111)
\Edge[,color={170.14949570500002,198.68999653499998,253.204599315},bend=-56.31,Direct,RGB](0000)(0010)
\Edge[,color={170.14949570500002,198.68999653499998,253.204599315},bend=-56.31,Direct,RGB](1000)(1010)
\Edge[,color={170.14949570500002,198.68999653499998,253.204599315},bend=-56.31,Direct,RGB](0010)(0000)
\Edge[,color={170.14949570500002,198.68999653499998,253.204599315},bend=-56.31,Direct,RGB](0011)(0001)
\Edge[,color={170.14949570500002,198.68999653499998,253.204599315},bend=-56.31,Direct,RGB](0001)(0011)
\Edge[,color={170.14949570500002,198.68999653499998,253.204599315},bend=-56.31,Direct,RGB](1111)(1101)
\Edge[,color={170.14949570500002,198.68999653499998,253.204599315},bend=-56.31,Direct,RGB](0110)(0100)
\Edge[,color={246.891866745,183.8152455,156.13471279},bend=-56.31,Direct,RGB](1100)(1101)
\Edge[,color={246.891866745,183.8152455,156.13471279},bend=-56.31,Direct,RGB](0111)(0110)
\Edge[,color={246.891866745,183.8152455,156.13471279},bend=-56.31,Direct,RGB](1001)(1000)
\Edge[,color={246.891866745,183.8152455,156.13471279},bend=-56.31,Direct,RGB](0100)(0101)
\Edge[,color={246.891866745,183.8152455,156.13471279},bend=-56.31,Direct,RGB](1110)(1111)
\Edge[,color={246.891866745,183.8152455,156.13471279},bend=-56.31,Direct,RGB](1010)(1011)
\Edge[,color={246.891866745,183.8152455,156.13471279},bend=-56.31,Direct,RGB](1101)(1100)
\Edge[,color={246.891866745,183.8152455,156.13471279},bend=-56.31,Direct,RGB](1011)(1010)
\Edge[,color={246.891866745,183.8152455,156.13471279},bend=-56.31,Direct,RGB](0101)(0100)
\Edge[,color={246.891866745,183.8152455,156.13471279},bend=-56.31,Direct,RGB](0000)(0001)
\Edge[,color={246.891866745,183.8152455,156.13471279},bend=-56.31,Direct,RGB](1000)(1001)
\Edge[,color={246.891866745,183.8152455,156.13471279},bend=-56.31,Direct,RGB](0010)(0011)
\Edge[,color={246.891866745,183.8152455,156.13471279},bend=-56.31,Direct,RGB](0011)(0010)
\Edge[,color={246.891866745,183.8152455,156.13471279},bend=-56.31,Direct,RGB](0001)(0000)
\Edge[,color={246.891866745,183.8152455,156.13471279},bend=-56.31,Direct,RGB](1111)(1110)
\Edge[,color={246.891866745,183.8152455,156.13471279},bend=-56.31,Direct,RGB](0110)(0111)
\Edge[,color={179.94665529,3.9668208,38.30936706},bend=-56.31,Direct,RGB](1100)(1000)
\Edge[,color={179.94665529,3.9668208,38.30936706},bend=-56.31,Direct,RGB](0111)(0011)
\Edge[,color={179.94665529,3.9668208,38.30936706},bend=-56.31,Direct,RGB](1001)(1101)
\Edge[,color={179.94665529,3.9668208,38.30936706},bend=-56.31,Direct,RGB](0100)(0000)
\Edge[,color={179.94665529,3.9668208,38.30936706},bend=-56.31,Direct,RGB](1110)(1010)
\Edge[,color={179.94665529,3.9668208,38.30936706},bend=-56.31,Direct,RGB](1010)(1110)
\Edge[,color={179.94665529,3.9668208,38.30936706},bend=-56.31,Direct,RGB](1101)(1001)
\Edge[,color={179.94665529,3.9668208,38.30936706},bend=-56.31,Direct,RGB](1011)(1111)
\Edge[,color={179.94665529,3.9668208,38.30936706},bend=-56.31,Direct,RGB](0101)(0001)
\Edge[,color={179.94665529,3.9668208,38.30936706},bend=-56.31,Direct,RGB](0000)(0100)
\Edge[,color={179.94665529,3.9668208,38.30936706},bend=-56.31,Direct,RGB](1000)(1100)
\Edge[,color={179.94665529,3.9668208,38.30936706},bend=-56.31,Direct,RGB](0010)(0110)
\Edge[,color={179.94665529,3.9668208,38.30936706},bend=-56.31,Direct,RGB](0011)(0111)
\Edge[,color={179.94665529,3.9668208,38.30936706},bend=-56.31,Direct,RGB](0001)(0101)
\Edge[,color={179.94665529,3.9668208,38.30936706},bend=-56.31,Direct,RGB](1111)(1011)
\Edge[,color={179.94665529,3.9668208,38.30936706},bend=-56.31,Direct,RGB](0110)(0010)

\definecolor{XIII}{RGB}{58.6004535,76.17308133,192.189204015};
\draw[XIII,very thick] (0.5,-.7) node[right,xshift=.9cm] {$XIII\langle IIII\rangle$} -- (0.7,-.7);

\definecolor{IIXI}{RGB}{170.14949570500002,198.68999653499998,253.204599315};
\draw[IIXI,very thick] (1.3,-.7) node[right,xshift=.9cm] {$IIXI\langle IIII\rangle$} -- (1.5,-.7);

\definecolor{IIIX}{RGB}{246.891866745,183.8152455,156.13471279};
\draw[IIIX,very thick] (2.1,-.7) node[right,xshift=.9cm] {$IIIX\langle IIII\rangle$} -- (2.3,-.7);

\definecolor{IXII}{RGB}{179.94665529,3.9668208,38.30936706};
\draw[IXII,very thick] (2.9,-.7) node[right,xshift=.9cm] {$IXII\langle IIII\rangle$} -- (3.1,-.7);

\end{tikzpicture}
    \caption{The proposed algorithm reproduces the standard "$X$"-mixer for the unconstrained case.}
    \label{fig:X4}
\end{figure}

The proposed method also reproduces the case which is sometimes referred to as "k-hot" or "XY"-mixer\replace{~\cite{fuchs2022constraint}
}{\cite{Wang2020}}.
The feasible subspace is defined by all computational basis states with exactly k ``1"s, i.e., $B=\{\ket{x}, \ x_j\in\{0,1\}^n, \ s.t. \sum{x_j}=k\}$. \add{As depicted in Figure \ref{fig:XY_64} for the 6 qubits 4-hot case each term in the mixer is of the form $XX+YY$ when restricted to the feasible subspace B.} 
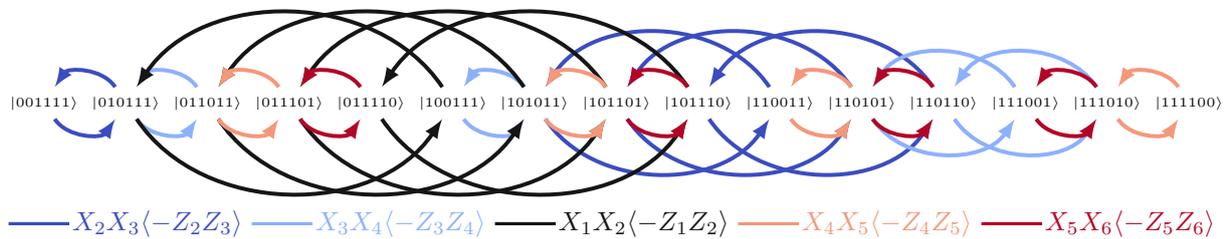
\begin{figure}
    \centering
    \begin{tikzpicture}[scale=4]
\node[color=black,opacity=1] (001111) at (0.175/1.5,0) {$\scriptscriptstyle\ket{001111}$};
\node[color=black,opacity=1] (010111) at (0.579/1.5,0) {$\scriptscriptstyle\ket{010111}$};
\node[color=black,opacity=1] (011011) at (0.982/1.5,0) {$\scriptscriptstyle\ket{011011}$};
\node[color=black,opacity=1] (011101) at (1.386/1.5,0) {$\scriptscriptstyle\ket{011101}$};
\node[color=black,opacity=1] (011110) at (1.789/1.5,0) {$\scriptscriptstyle\ket{011110}$};
\node[color=black,opacity=1] (100111) at (2.193/1.5,0) {$\scriptscriptstyle\ket{100111}$};
\node[color=black,opacity=1] (101011) at (2.596/1.5,0) {$\scriptscriptstyle\ket{101011}$};
\node[color=black,opacity=1] (101101) at (3.000/1.5,0) {$\scriptscriptstyle\ket{101101}$};
\node[color=black,opacity=1] (101110) at (3.404/1.5,0) {$\scriptscriptstyle\ket{101110}$};
\node[color=black,opacity=1] (110011) at (3.807/1.5,0) {$\scriptscriptstyle\ket{110011}$};
\node[color=black,opacity=1] (110101) at (4.211/1.5,0) {$\scriptscriptstyle\ket{110101}$};
\node[color=black,opacity=1] (110110) at (4.614/1.5,0) {$\scriptscriptstyle\ket{110110}$};
\node[color=black,opacity=1] (111001) at (5.018/1.5,0) {$\scriptscriptstyle\ket{111001}$};
\node[color=black,opacity=1] (111010) at (5.421/1.5,0) {$\scriptscriptstyle\ket{111010}$};
\node[color=black,opacity=1] (111100) at (5.825/1.5,0) {$\scriptscriptstyle\ket{111100}$};

\Edge[,color={58.6004535,76.17308133,192.189204015},bend=-56.31,Direct,RGB](001111)(010111)
\Edge[,color={58.6004535,76.17308133,192.189204015},bend=-56.31,Direct,RGB](010111)(001111)
\Edge[,color={58.6004535,76.17308133,192.189204015},bend=-56.31,Direct,RGB](101011)(110011)
\Edge[,color={58.6004535,76.17308133,192.189204015},bend=-56.31,Direct,RGB](101101)(110101)
\Edge[,color={58.6004535,76.17308133,192.189204015},bend=-56.31,Direct,RGB](101110)(110110)
\Edge[,color={58.6004535,76.17308133,192.189204015},bend=-56.31,Direct,RGB](110011)(101011)https://www.overleaf.com/project/62e25ce55dd7026148648d5b
\Edge[,color={58.6004535,76.17308133,192.189204015},bend=-56.31,Direct,RGB](110101)(101101)
\Edge[,color={58.6004535,76.17308133,192.189204015},bend=-56.31,Direct,RGB](110110)(101110)

\definecolor{IXXIII}{RGB}{58.6004535,76.17308133,192.189204015};
\draw[IXXIII,very thick] (0.0,-.4) node[right,xshift=.7cm] {$X_2 X_3\langle -Z_2 Z_3\rangle$} -- (0.2,-.4);

\def\shift{.8}

\Edge[,color={141.34952682799997,175.97473785999998,253.8564648},bend=-56.31,Direct,RGB](010111)(011011)
\Edge[,color={141.34952682799997,175.97473785999998,253.8564648},bend=-56.31,Direct,RGB](011011)(010111)
\Edge[,color={141.34952682799997,175.97473785999998,253.8564648},bend=-56.31,Direct,RGB](100111)(101011)
\Edge[,color={141.34952682799997,175.97473785999998,253.8564648},bend=-56.31,Direct,RGB](101011)(100111)
\Edge[,color={141.34952682799997,175.97473785999998,253.8564648},bend=-56.31,Direct,RGB](110101)(111001)
\Edge[,color={141.34952682799997,175.97473785999998,253.8564648},bend=-56.31,Direct,RGB](110110)(111010)
\Edge[,color={141.34952682799997,175.97473785999998,253.8564648},bend=-56.31,Direct,RGB](111001)(110101)
\Edge[,color={141.34952682799997,175.97473785999998,253.8564648},bend=-56.31,Direct,RGB](111010)(110110)

\definecolor{IIXXII}{RGB}{141.34952682799997,175.97473785999998,253.8564648}
\draw[IIXXII,very thick] (\shift, -.4) node[right,xshift=.7cm] {$X_3 X_4\langle -Z_3 Z_4\rangle$} -- (\shift+.2,-.4);

\Edge[,color={21,21,21},bend=-56.31,Direct,RGB](010111)(100111)
\Edge[,color={21,21,21},bend=-56.31,Direct,RGB](011011)(101011)
\Edge[,color={21,21,21},bend=-56.31,Direct,RGB](011101)(101101)
\Edge[,color={21,21,21},bend=-56.31,Direct,RGB](011110)(101110)
\Edge[,color={21,21,21},bend=-56.31,Direct,RGB](100111)(010111)
\Edge[,color={21,21,21},bend=-56.31,Direct,RGB](101011)(011011)
\Edge[,color={21,21,21},bend=-56.31,Direct,RGB](101101)(011101)
\Edge[,color={21,21,21},bend=-56.31,Direct,RGB](101110)(011110)

\definecolor{XXIIII}{RGB}{21,21,21}
\draw[XXIIII,very thick] (2*\shift,-.4) node[right,xshift=.7cm] {$X_1 X_2\langle -Z_1 Z_2\rangle$} -- (2*\shift+.2,-.4);

\Edge[,color={243.946568799,152.498624793,121.71208455},bend=-56.31,Direct,RGB](011011)(011101)
\Edge[,color={243.946568799,152.498624793,121.71208455},bend=-56.31,Direct,RGB](011101)(011011)
\Edge[,color={243.946568799,152.498624793,121.71208455},bend=-56.31,Direct,RGB](101011)(101101)
\Edge[,color={243.946568799,152.498624793,121.71208455},bend=-56.31,Direct,RGB](101101)(101011)
\Edge[,color={243.946568799,152.498624793,121.71208455},bend=-56.31,Direct,RGB](110011)(110101)
\Edge[,color={243.946568799,152.498624793,121.71208455},bend=-56.31,Direct,RGB](110101)(110011)
\Edge[,color={243.946568799,152.498624793,121.71208455},bend=-56.31,Direct,RGB](111010)(111100)
\Edge[,color={243.946568799,152.498624793,121.71208455},bend=-56.31,Direct,RGB](111100)(111010)

\definecolor{IIIXXI}{RGB}{243.946568799,152.498624793,121.71208455}
\draw[IIIXXI,very thick] (3*\shift,-.4) node[right,xshift=.7cm] {$X_4 X_5\langle -Z_4 Z_5\rangle$} -- (3*\shift+.2,-.4);

\Edge[,color={179.94665529,3.9668208,38.30936706},bend=-56.31,Direct,RGB](011101)(011110)
\Edge[,color={179.94665529,3.9668208,38.30936706},bend=-56.31,Direct,RGB](011110)(011101)
\Edge[,color={179.94665529,3.9668208,38.30936706},bend=-56.31,Direct,RGB](101101)(101110)
\Edge[,color={179.94665529,3.9668208,38.30936706},bend=-56.31,Direct,RGB](101110)(101101)
\Edge[,color={179.94665529,3.9668208,38.30936706},bend=-56.31,Direct,RGB](110101)(110110)
\Edge[,color={179.94665529,3.9668208,38.30936706},bend=-56.31,Direct,RGB](110110)(110101)
\Edge[,color={179.94665529,3.9668208,38.30936706},bend=-56.31,Direct,RGB](111001)(111010)
\Edge[,color={179.94665529,3.9668208,38.30936706},bend=-56.31,Direct,RGB](111010)(111001)

\definecolor{IIIIXX}{RGB}{179.94665529,3.9668208,38.30936706}
\draw[IIIIXX,very thick] (4*\shift,-.4) node[right,xshift=.7cm] {$X_5 X_6\langle -Z_5 Z_6\rangle$} -- (4*\shift+.2,-.4);

\end{tikzpicture}
    \caption{The proposed algorithm reproduces the standard "$XY$"-mixer for k-hot states.
    Here illustrated for $k=4$ with 6 qubits.
    }
    \label{fig:XY_64}
\end{figure}
An extension of this is shown in Figure~\ref{fig:Dicke514}, were $B$ consists of k-hot states for a range of different values k.
The approach shows that a low-cost mixer can be constructed as a combination of XY- and X-mixers.

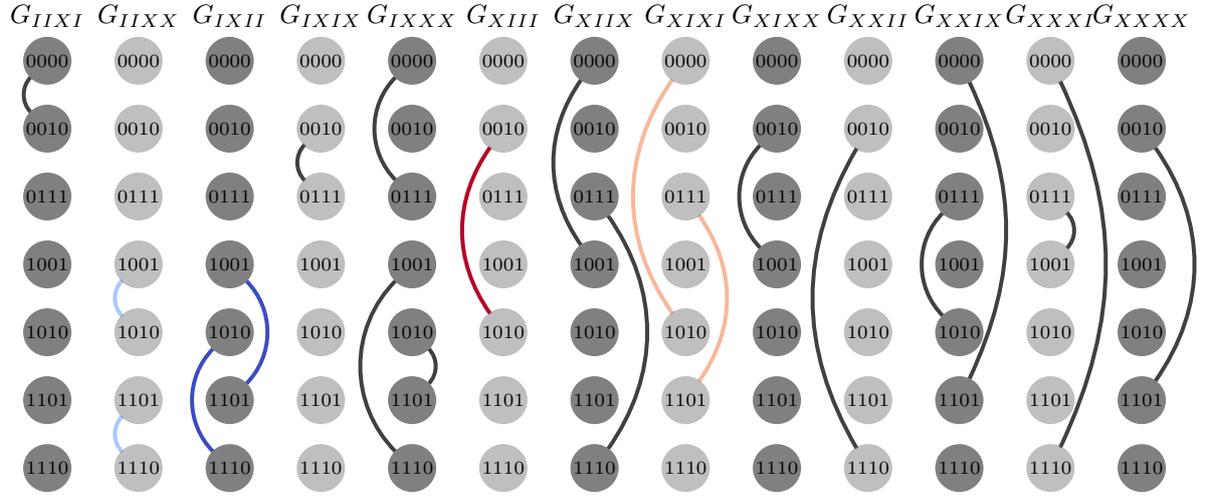
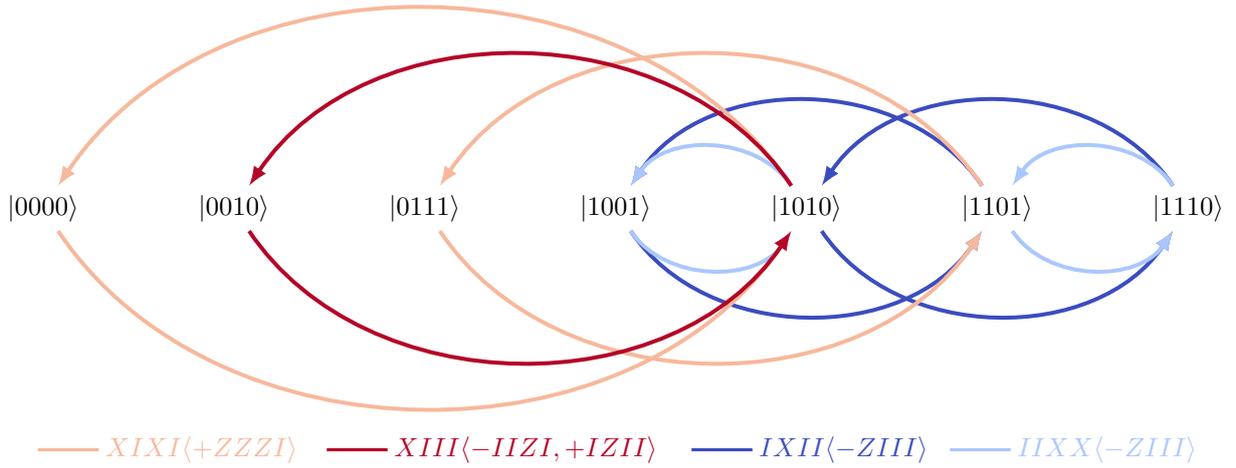
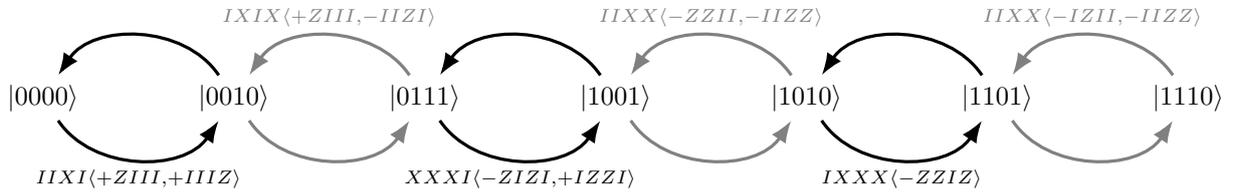
\begin{figure}
    \centering

     \begin{subfigure}[b]{0.98\textwidth}
     \centering
    \begin{tikzpicture}[scale=.60]%

\definecolor{IXII}{RGB}{58.6004535,76.17308133,192.189204015};
\definecolor{IIXX}{RGB}{170.14949570500002,198.68999653499998,253.204599315};
\definecolor{XIXI}{RGB}{246.891866745,183.8152455,156.13471279};
\definecolor{XIII}{RGB}{179.94665529,3.9668208,38.30936706};

\node[] at (0,10)  {$G_{IIXI}$};
\node[] at (2,10)  {$G_{IIXX}$};
\node[] at (4,10)  {$G_{IXII}$};
\node[] at (6,10)  {$G_{IXIX}$};
\node[] at (8,10)  {$G_{IXXX}$};
\node[] at (10,10) {$G_{XIII}$};
\node[] at (12,10) {$G_{XIIX}$};
\node[] at (14,10) {$G_{XIXI}$};
\node[] at (16,10) {$G_{XIXX}$};
\node[] at (18,10) {$G_{XXII}$};
\node[] at (20,10) {$G_{XXIX}$};
\node[] at (22,10) {$G_{XXXI}$};
\node[] at (24,10) {$G_{XXXX}$};

\def\x{0}
\Vertex[style={color=gray},x=\x,y=3  ,label=1010]{1010}
\Vertex[style={color=gray},x=\x,y=6  ,label=0111]{0111}
\Vertex[style={color=gray},x=\x,y=0  ,label=1110]{1110}
\Vertex[style={color=gray},x=\x,y=4.5,label=1001]{1001}
\Vertex[style={color=gray},x=\x,y=7.5,label=0010]{0010}
\Vertex[style={color=gray},x=\x,y=9  ,label=0000]{0000}
\Vertex[style={color=gray},x=\x,y=1.5,label=1101]{1101}
\Edge[bend=+45](0010)(0000)

\def\x{2}
\Vertex[style={color=gray,opacity=.5},x=\x,y=3  ,label=1010]{1010}
\Vertex[style={color=gray,opacity=.5},x=\x,y=6  ,label=0111]{0111}
\Vertex[style={color=gray,opacity=.5},x=\x,y=0  ,label=1110]{1110}
\Vertex[style={color=gray,opacity=.5},x=\x,y=4.5,label=1001]{1001}
\Vertex[style={color=gray,opacity=.5},x=\x,y=7.5,label=0010]{0010}
\Vertex[style={color=gray,opacity=.5},x=\x,y=9  ,label=0000]{0000}
\Vertex[style={color=gray,opacity=.5},x=\x,y=1.5,label=1101]{1101}
\Edge[color=IIXX,bend=+45](1010)(1001)
\Edge[color=IIXX,bend=+45](1110)(1101)

\def\x{4}
\Vertex[style={color=gray},x=\x,y=3  ,label=1010]{1010}
\Vertex[style={color=gray},x=\x,y=6  ,label=0111]{0111}
\Vertex[style={color=gray},x=\x,y=0  ,label=1110]{1110}
\Vertex[style={color=gray},x=\x,y=4.5,label=1001]{1001}
\Vertex[style={color=gray},x=\x,y=7.5,label=0010]{0010}
\Vertex[style={color=gray},x=\x,y=9  ,label=0000]{0000}
\Vertex[style={color=gray},x=\x,y=1.5,label=1101]{1101}
\Edge[color=IXII,bend=-45](1010)(1110)
\Edge[color=IXII,bend=+45](1001)(1101)

\def\x{6}
\Vertex[style={color=gray,opacity=.5},x=\x,y=3  ,label=1010]{1010}
\Vertex[style={color=gray,opacity=.5},x=\x,y=6  ,label=0111]{0111}
\Vertex[style={color=gray,opacity=.5},x=\x,y=0  ,label=1110]{1110}
\Vertex[style={color=gray,opacity=.5},x=\x,y=4.5,label=1001]{1001}
\Vertex[style={color=gray,opacity=.5},x=\x,y=7.5,label=0010]{0010}
\Vertex[style={color=gray,opacity=.5},x=\x,y=9  ,label=0000]{0000}
\Vertex[style={color=gray,opacity=.5},x=\x,y=1.5,label=1101]{1101}
\Edge[bend=+45](0111)(0010)

\def\x{8}
\Vertex[style={color=gray},x=\x,y=3  ,label=1010]{1010}
\Vertex[style={color=gray},x=\x,y=6  ,label=0111]{0111}
\Vertex[style={color=gray},x=\x,y=0  ,label=1110]{1110}
\Vertex[style={color=gray},x=\x,y=4.5,label=1001]{1001}
\Vertex[style={color=gray},x=\x,y=7.5,label=0010]{0010}
\Vertex[style={color=gray},x=\x,y=9  ,label=0000]{0000}
\Vertex[style={color=gray},x=\x,y=1.5,label=1101]{1101}
\Edge[bend=+45](1010)(1101)
\Edge[bend=+45](0111)(0000)
\Edge[bend=+45](1110)(1001)

\def\x{10}
\Vertex[style={color=gray,opacity=.5},x=\x,y=3  ,label=1010]{1010}
\Vertex[style={color=gray,opacity=.5},x=\x,y=6  ,label=0111]{0111}
\Vertex[style={color=gray,opacity=.5},x=\x,y=0  ,label=1110]{1110}
\Vertex[style={color=gray,opacity=.5},x=\x,y=4.5,label=1001]{1001}
\Vertex[style={color=gray,opacity=.5},x=\x,y=7.5,label=0010]{0010}
\Vertex[style={color=gray,opacity=.5},x=\x,y=9  ,label=0000]{0000}
\Vertex[style={color=gray,opacity=.5},x=\x,y=1.5,label=1101]{1101}
\Edge[color=XIII,bend=+35](1010)(0010)

\def\x{12}
\Vertex[style={color=gray},x=\x,y=3  ,label=1010]{1010}
\Vertex[style={color=gray},x=\x,y=6  ,label=0111]{0111}
\Vertex[style={color=gray},x=\x,y=0  ,label=1110]{1110}
\Vertex[style={color=gray},x=\x,y=4.5,label=1001]{1001}
\Vertex[style={color=gray},x=\x,y=7.5,label=0010]{0010}
\Vertex[style={color=gray},x=\x,y=9  ,label=0000]{0000}
\Vertex[style={color=gray},x=\x,y=1.5,label=1101]{1101}
\Edge[bend=+35](0111)(1110)
\Edge[bend=+35](1001)(0000)

\def\x{14}
\Vertex[style={color=gray,opacity=.5},x=\x,y=3  ,label=1010]{1010}
\Vertex[style={color=gray,opacity=.5},x=\x,y=6  ,label=0111]{0111}
\Vertex[style={color=gray,opacity=.5},x=\x,y=0  ,label=1110]{1110}
\Vertex[style={color=gray,opacity=.5},x=\x,y=4.5,label=1001]{1001}
\Vertex[style={color=gray,opacity=.5},x=\x,y=7.5,label=0010]{0010}
\Vertex[style={color=gray,opacity=.5},x=\x,y=9  ,label=0000]{0000}
\Vertex[style={color=gray,opacity=.5},x=\x,y=1.5,label=1101]{1101}
\Edge[color=XIXI,bend=+35](1010)(0000)
\Edge[color=XIXI,bend=+35](0111)(1101)

\def\x{16}
\Vertex[style={color=gray},x=\x,y=3  ,label=1010]{1010}
\Vertex[style={color=gray},x=\x,y=6  ,label=0111]{0111}
\Vertex[style={color=gray},x=\x,y=0  ,label=1110]{1110}
\Vertex[style={color=gray},x=\x,y=4.5,label=1001]{1001}
\Vertex[style={color=gray},x=\x,y=7.5,label=0010]{0010}
\Vertex[style={color=gray},x=\x,y=9  ,label=0000]{0000}
\Vertex[style={color=gray},x=\x,y=1.5,label=1101]{1101}
\Edge[bend=+45](1001)(0010)

\def\x{18}
\Vertex[style={color=gray,opacity=.5},x=\x,y=3  ,label=1010]{1010}
\Vertex[style={color=gray,opacity=.5},x=\x,y=6  ,label=0111]{0111}
\Vertex[style={color=gray,opacity=.5},x=\x,y=0  ,label=1110]{1110}
\Vertex[style={color=gray,opacity=.5},x=\x,y=4.5,label=1001]{1001}
\Vertex[style={color=gray,opacity=.5},x=\x,y=7.5,label=0010]{0010}
\Vertex[style={color=gray,opacity=.5},x=\x,y=9  ,label=0000]{0000}
\Vertex[style={color=gray,opacity=.5},x=\x,y=1.5,label=1101]{1101}
\Edge[bend=+30](1110)(0010)

\def\x{20}
\Vertex[style={color=gray},x=\x,y=3  ,label=1010]{1010}
\Vertex[style={color=gray},x=\x,y=6  ,label=0111]{0111}
\Vertex[style={color=gray},x=\x,y=0  ,label=1110]{1110}
\Vertex[style={color=gray},x=\x,y=4.5,label=1001]{1001}
\Vertex[style={color=gray},x=\x,y=7.5,label=0010]{0010}
\Vertex[style={color=gray},x=\x,y=9  ,label=0000]{0000}
\Vertex[style={color=gray},x=\x,y=1.5,label=1101]{1101}
\Edge[bend=+45](1010)(0111)
\Edge[bend=+25](0000)(1101)

\def\x{22}
\Vertex[style={color=gray,opacity=.5},x=\x,y=3  ,label=1010]{1010}
\Vertex[style={color=gray,opacity=.5},x=\x,y=6  ,label=0111]{0111}
\Vertex[style={color=gray,opacity=.5},x=\x,y=0  ,label=1110]{1110}
\Vertex[style={color=gray,opacity=.5},x=\x,y=4.5,label=1001]{1001}
\Vertex[style={color=gray,opacity=.5},x=\x,y=7.5,label=0010]{0010}
\Vertex[style={color=gray,opacity=.5},x=\x,y=9  ,label=0000]{0000}
\Vertex[style={color=gray,opacity=.5},x=\x,y=1.5,label=1101]{1101}
\Edge[bend=+45](0111)(1001)
\Edge[bend=-25](1110)(0000)

\def\x{24}
\Vertex[style={color=gray},x=\x,y=3  ,label=1010]{1010}
\Vertex[style={color=gray},x=\x,y=6  ,label=0111]{0111}
\Vertex[style={color=gray},x=\x,y=0  ,label=1110]{1110}
\Vertex[style={color=gray},x=\x,y=4.5,label=1001]{1001}
\Vertex[style={color=gray},x=\x,y=7.5,label=0010]{0010}
\Vertex[style={color=gray},x=\x,y=9  ,label=0000]{0000}
\Vertex[style={color=gray},x=\x,y=1.5,label=1101]{1101}
\Edge[bend=+35](0010)(1101)

\end{tikzpicture}
    \caption{}
    \end{subfigure}
    
     \begin{subfigure}[b]{0.98\textwidth}
     \centering
    \begin{tikzpicture}[scale=4]
\node[color=black,opacity=1] (0000) at (0.175/1.5,0) {$\ket{0000}$};
\node[color=black,opacity=1] (0010) at (1.117/1.5,0) {$\ket{0010}$};
\node[color=black,opacity=1] (0111) at (2.058/1.5,0) {$\ket{0111}$};
\node[color=black,opacity=1] (1001) at (3.000/1.5,0) {$\ket{1001}$};
\node[color=black,opacity=1] (1010) at (3.942/1.5,0) {$\ket{1010}$};
\node[color=black,opacity=1] (1101) at (4.883/1.5,0) {$\ket{1101}$};
\node[color=black,opacity=1] (1110) at (5.825/1.5,0) {$\ket{1110}$};

\Edge[,color={58.6004535,76.17308133,192.189204015},bend=-26.31,Direct,RGB](1101)(1001)
\Edge[,color={58.6004535,76.17308133,192.189204015},bend=-26.31,Direct,RGB](1001)(1101)
\Edge[,color={58.6004535,76.17308133,192.189204015},bend=-26.31,Direct,RGB](1110)(1010)
\Edge[,color={58.6004535,76.17308133,192.189204015},bend=-26.31,Direct,RGB](1010)(1110)
\Edge[,color={170.14949570500002,198.68999653499998,253.204599315},bend=-26.31,Direct,RGB](1101)(1110)
\Edge[,color={170.14949570500002,198.68999653499998,253.204599315},bend=-26.31,Direct,RGB](1001)(1010)
\Edge[,color={170.14949570500002,198.68999653499998,253.204599315},bend=-26.31,Direct,RGB](1110)(1101)
\Edge[,color={170.14949570500002,198.68999653499998,253.204599315},bend=-26.31,Direct,RGB](1010)(1001)
\Edge[,color={246.891866745,183.8152455,156.13471279},bend=-26.31,Direct,RGB](1101)(0111)
\Edge[,color={246.891866745,183.8152455,156.13471279},bend=-22.31,Direct,RGB](1010)(0000)
\Edge[,color={246.891866745,183.8152455,156.13471279},bend=-22.31,Direct,RGB](0000)(1010)
\Edge[,color={246.891866745,183.8152455,156.13471279},bend=-26.31,Direct,RGB](0111)(1101)
\Edge[,color={179.94665529,3.9668208,38.30936706},bend=-26.31,Direct,RGB](1010)(0010)
\Edge[,color={179.94665529,3.9668208,38.30936706},bend=-26.31,Direct,RGB](0010)(1010)

\def\dist{-.4}
\definecolor{XIXI}{RGB}{246.891866745,183.8152455,156.13471279};
\draw[XIXI,very thick] (0.1,\dist) node[right,xshift=.75cm] {$XIXI\langle +ZZZI \rangle$} -- (0.3,\dist);

\definecolor{XIII}{RGB}{179.94665529,3.9668208,38.30936706};
\draw[XIII,very thick] (1.05,\dist) node[right,xshift=.75cm] {$XIII\langle -IIZI, +IZII \rangle$} -- (1.25,\dist);

\definecolor{IXII}{RGB}{58.6004535,76.17308133,192.189204015};
\draw[IXII,very thick] (2.5-.25,\dist) node[right,xshift=.75cm] {$IXII\langle -ZIII \rangle$} -- (2.7-.25,\dist);

\definecolor{IIXX}{RGB}{170.14949570500002,198.68999653499998,253.204599315};
\draw[IIXX,very thick] (3.1,\dist) node[right,xshift=.75cm] {$IIXX\langle -ZIII \rangle$} -- (3.3,\dist);

\end{tikzpicture}
    \caption{}
    \end{subfigure}

\vspace{-1\baselineskip}
     \begin{subfigure}[b]{0.98\textwidth}
     \centering
    \begin{tikzpicture}[scale=4]
\node[opacity=0] (s) at (0.175/1.5,.5) {s};
\node[color=black,opacity=1] (0000) at (0.175/1.5,0) {$\ket{0000}$};
\node[color=black,opacity=1] (0010) at (1.117/1.5,0) {$\ket{0010}$};
\node[color=black,opacity=1] (0111) at (2.058/1.5,0) {$\ket{0111}$};
\node[color=black,opacity=1] (1001) at (3.000/1.5,0) {$\ket{1001}$};
\node[color=black,opacity=1] (1010) at (3.942/1.5,0) {$\ket{1010}$};
\node[color=black,opacity=1] (1101) at (4.883/1.5,0) {$\ket{1101}$};
\node[color=black,opacity=1] (1110) at (5.825/1.5,0) {$\ket{1110}$};

\draw[very thick, Latex-] (0010) to [bend right=-56.31] node[below] {$\scriptstyle IIXI\langle +ZIII, +IIIZ \rangle$} (0000);
\draw[very thick, Latex-] (0000) to [bend right=-56.31] (0010);

\draw[very thick, Latex-,gray] (0010) to [bend right=-56.31] node[above] {\color{gray} $\scriptstyle IXIX\langle +ZIII, -IIZI \rangle$} (0111);
\draw[very thick, Latex-,gray] (0111) to [bend right=-56.31] (0010);

\draw[very thick, Latex-] (0111) to [bend right=-56.31] (1001);
\draw[very thick, Latex-] (1001) to [bend right=-56.31] node[below] {$\scriptstyle XXXI\langle -ZIZI, +IZZI \rangle$} (0111);

\draw[very thick, Latex-,gray] (1001) to [bend right=-56.31] node[above] {\color{gray} $\scriptstyle IIXX\langle -ZZII, -IIZZ \rangle$} (1010);
\draw[very thick, Latex-,gray] (1010) to [bend right=-56.31] (1001);

\draw[very thick, Latex-] (1010) to [bend right=-56.31] (1101);
\draw[very thick, Latex-] (1101) to [bend right=-56.31] node[below] {$\scriptstyle IXXX\langle -ZZIZ \rangle$} (1010);

\draw[very thick, Latex-,gray] (1101) to [bend right=-56.31] node[above] {\color{gray} $\scriptstyle IIXX \langle -IZII, -IIZZ \rangle$} (1110);
\draw[very thick, Latex-,gray] (1110) to [bend right=-56.31] (1101);

\end{tikzpicture}
    \caption{}
    \end{subfigure}

    \caption{(a) Example of family of graphs with commuting pairs for $B=\{\ket{1010}, \ket{0111}, \ket{1110}, \ket{1001}, \ket{0010}, \allowbreak \ket{0000}, \ket{1101}\}$.
    (b) The optimal Trotterization as proposed in this work has cost 64/22 (with/without restriction to feasible subspace).
    (c) The linear "chain" mixer has cost 200/98 (with/without restriction to feasible subspace).}
    \label{fig:comm_pairs_2}
\end{figure}

Finally, we illustrate the family of graphs
for the case when  $B=\{\ket{1010}, \ket{0111}, \allowbreak  \ket{1110},  \allowbreak \ket{1001}, \allowbreak  \ket{0010}, \allowbreak \ket{0000}, \ket{1101}\}$ in Figure~\ref{fig:comm_pairs_2}.
Comparing the "chain" mixer to the optimal solutions results in a roughly 70\% fewer CNOT gates.
Restricting the mixers to the feasible subspace further reduces the number of CNOT gates by 75\% and 51\%, respectively.

\section{Applications and practical considerations}
The numerical results discussed previously demonstrate a promising reduction in cost, specifically in terms of CNOTs, through the implementation of a general algorithm for optimal mixer circuits.
Despite this advancement the algorithm will in general scale prohibitively with $n$ when applied to general unstructured set $B$. %
The reason for this is that the family of graphs scale exponentially with respect to the number of vertices %
that must be evaluated.
This restricts the feasibility of this approach for unstructured problems, limiting its application to small-scale cases.
Nevertheless, the introduced formalism holds promise for discovering (near-)optimal mixers in terms of CNOTs when a structured feasible subset is provided by the problem at hand.
In this section, we show one way the proposed method scales favourably with the number of qubits when the bitstrings of the feasible subspace admit a Cartesian product structure.
Furthermore, we show a relaxation of the k-hot case to the multi-k-hot case.
Finally, we show a numerical simulation of a small case combining the two approaches \add{employing the LX-QAOA ansatz on a 10 variables optimization problem}.

\subsection{Feasible sets with Cartesian product structure}
\label{sec:cartesianprod}
When the feasible bitstrings admit a Cartesian product structure, the feasible subspace becomes a tensor product of the form
\begin{equation}
    B = B^1\otimes \cdots \otimes B^L.
\end{equation}
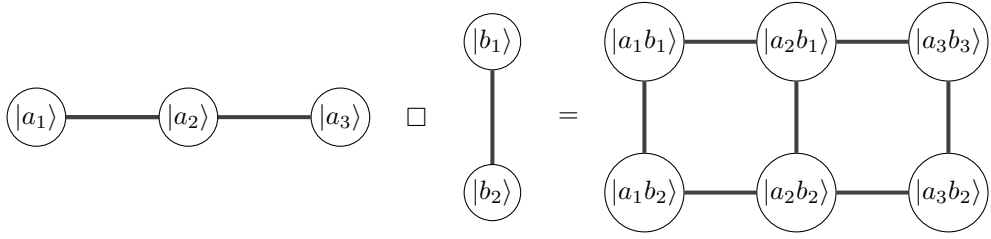
\begin{figure}
    \centering
    \begin{tikzpicture}
            \node[circle,draw=black, fill=white, inner sep=1pt, align=center] (A1) at (-2,0) {$\ket{a_1}$};
            \node[circle,draw=black, fill=white, inner sep=1pt, align=center] (A2) at (0,0) {$\ket{a_2}$};
            \node[circle,draw=black, fill=white, inner sep=1pt, align=center] (A3) at (2,0) {$\ket{a_3}$};

            \Edge[bend=0,position=left](A1)(A2)
            \Edge[bend=0,position=left](A1)(A3)
            \node[] at (3,0) {$\Box$};

            \node[circle,draw=black, fill=white, inner sep=1pt, align=center] (B1) at (4,1) {$\ket{b_1}$};
            \node[circle,draw=black, fill=white, inner sep=1pt, align=center] (B2) at (4,-1) {$\ket{b_2}$};

            \Edge[bend=0,position=left](B1)(B2)

            \node[] at (5,0) {$=$};

            \node[circle,draw=black, fill=white, inner sep=1pt, align=center] (A1B1) at (6,1) {$\ket{a_1b_1}$};
            \node[circle,draw=black, fill=white, inner sep=1pt, align=center] (A2B1) at (8,1) {$\ket{a_2b_1}$};
            \node[circle,draw=black, fill=white, inner sep=1pt, align=center] (A3B1) at (10,1) {$\ket{a_3b_3}$};
            \node[circle,draw=black, fill=white, inner sep=1pt, align=center] (A1B2) at (6,-1) {$\ket{a_1b_2}$};
            \node[circle,draw=black, fill=white, inner sep=1pt, align=center] (A2B2) at (8,-1) {$\ket{a_2b_2}$};
            \node[circle,draw=black, fill=white, inner sep=1pt, align=center] (A3B2) at (10,-1) {$\ket{a_3b_2}$};

            \Edge[bend=0,position=left](A1B1)(A2B1)
            \Edge[bend=0,position=left](A2B1)(A3B1)
            \Edge[bend=0,position=left](A1B2)(A2B2)
            \Edge[bend=0,position=left](A2B2)(A3B2)
            \Edge[bend=0,position=left](A1B1)(A1B2)
            \Edge[bend=0,position=left](A2B1)(A2B2)
            \Edge[bend=0,position=left](A3B1)(A3B2)

\end{tikzpicture}
    \caption{
    An example of the box product construction for $B=\{ \ket{s_1 s_2}, s_1\in\{10,01\}, s_2\in \{100, 010, 001\}\}$.
    For this case $a_1=100, a_2=010, a_3=001$, $b_1=10, b_2=01$, with optimal mixers $H=H_1+H_2$, where $H_1=X_1X_2\langle -Z_1Z_2\rangle + X_2X_3\langle -Z_2 Z_3 \rangle $ and $H_2=X_1X_2$.
    }
    \label{fig:cartesian}
\end{figure}
For this case it is possible to find an optimal mixer by employing the proposed algorithm independently on each subspace $B_i$.
The algorithm remains scalable if there are a fixed number of different cases $B_i$ when $L$ is increased.
Given the independence of the constraints, we can then define the following mixer Hamiltonian
\begin{equation}
    H_{M}=\sum_{i=1}^{L} H_i,
\end{equation}
where $H_i$ acts non-trivially on the subspace according to $B_i$.
From algebraic graph theory, it is known that the Kronecker sum of adjacency matrices corresponds to what is commonly referred to as the Cartesian product of graphs, see Figure~\ref{fig:cartesian}.

{\centering
 \begin{minipage}{1\textwidth}
\begin{definition}[Cartesian product of graphs]
\label{def:Cartesianproduct}
Let G(V,E) and H(V',E') be two graphs, the Cartesian product $G\Box H$, usually called box product, is a graph such that:
\begin{itemize}
    \item The vertex set of $G(V,E)\Box H(V',E')$ is composed by the Cartesian product of $VxV'$
    \item Two vertex (u,u') and (v,v') are adjacent in $G\Box H$ if
    \begin{itemize}
        \item u = v and u' and v' are adjacent in H
        \item u'= v' and u and v are adjacent in G 
    \end{itemize}
\end{itemize}
\end{definition}
 \end{minipage}
}
By construction, each distinct $H_{i}$ defines a valid \add{LX-}mixer for $B_i$ represented by a connected adjacency matrix. Therefore we need to verify that the adjacency matrix of the graph associated to the new Hamiltonian $H_{M}$ is connected too or more specifically that the Cartesian product of connected graphs is connected.

\begin{theorem}[Cartesian product Connectivity]
\label{theorem:connCartesian}
    Let G(V,E) and H(V',E') be connected graphs then $G\Box H$ is connected.
\end{theorem}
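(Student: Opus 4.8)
The plan is to prove connectivity directly, by exhibiting a walk between any two prescribed vertices of $G\Box H$; since having a walk between every pair of vertices is equivalent to connectivity, this suffices. Fix arbitrary vertices $(u_1, u_1')$ and $(u_2, u_2')$ in $V\times V'$. Because $G$ is connected, there is a path $u_1 = a_0, a_1, \dots, a_k = u_2$ in $G$, and because $H$ is connected, there is a path $u_1' = b_0, b_1, \dots, b_m = u_2'$ in $H$. The whole argument reduces to stitching these two factor-paths together inside the product.

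The key idea is to traverse $G\Box H$ in two phases, moving along only one coordinate at a time (an ``L-shaped'' walk). First I would move only in the first coordinate while freezing the second at $u_1'$: the sequence $(a_0, u_1'), (a_1, u_1'), \dots, (a_k, u_1')$ is a valid walk in $G\Box H$, since consecutive vertices share the identical second coordinate $u_1'$ and their first coordinates are adjacent in $G$ --- which is precisely the second adjacency clause of Definition~\ref{def:Cartesianproduct}. This phase terminates at $(u_2, u_1')$. Next I would move only in the second coordinate while freezing the first at $u_2$: the sequence $(u_2, b_0), (u_2, b_1), \dots, (u_2, b_m)$ is again a valid walk, because consecutive vertices share the first coordinate $u_2$ and their second coordinates are adjacent in $H$, matching the first adjacency clause. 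This phase ends at $(u_2, u_2')$.

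Concatenating the two walks at their common vertex $(u_2, u_1')$ produces a walk from $(u_1, u_1')$ to $(u_2, u_2')$, so the two vertices lie in the same connected component. As the chosen pair was arbitrary, every pair of vertices is joined by a walk and hence $G\Box H$ is connected.

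I do not expect a genuine obstacle here; the result is elementary. The only point requiring care is to confirm that each elementary step of the constructed walk honestly satisfies one of the two adjacency conditions --- one coordinate held fixed while the other advances along a factor edge --- so that no step ever attempts to change both coordinates at once. This verification is immediate from the construction, and the symmetry between the two factors means the roles of $G$ and $H$ could equally well be interchanged without affecting the conclusion.
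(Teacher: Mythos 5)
Your proof is correct and follows essentially the same route as the paper's: both construct an ``L-shaped'' walk that changes one coordinate at a time through the intermediate vertex, using connectivity of each factor for the corresponding leg (the paper merely performs the two legs in the opposite order). No further comment is needed.
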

\begin{proof}
Consider two connected graphs G and H. Let $V(G) = \left\{v_1, v_2, ..., v_n\right\}$ and $V(H) = \left\{u_1, u_2, ..., u_m\right\}$ be the vertex sets of G and H, respectively.
For any vertex $v_i \in V(G)$, consider the set of vertices $V_{v_i} = \left\{(v_i, u_j) | u_j \in V(H)\right\}$ in the Cartesian product graph $G\Box H$.
Since H is connected, for each $v_i$, there exists a path in H connecting any two vertices $u_a$ and $u_b$. Therefore, for any pair of vertices $(v_i, u_a)$ and $(v_i, u_b)$ in $V_{v_i}$, there exists a path in $G\Box H$ connecting them, following the Cartesian product definition.
The same argument can be used for any vertex $u_i \in V(H)$, consider the set of vertices $V_{u_i} = \left\{(v_j, u_i) | v_j \in V(G)\right\}$ in the Cartesian product graph $G\Box H$ and using the Cartesian product definition.

Now consider any pair of vertices $(v_i, u_a)$ and $(v_k, u_b)$ in $G\Box H$.
We can construct a path between $(v_i, u_a)$ and $(v_k, u_b)$ as follows:
\begin{enumerate}
    \item Move from $(v_i, u_a)$ to $(v_i, u_b)$ along a path in $V_{v_i}$, which exists as shown earlier since G is connected. %
    \item Move from $(v_i, u_b)$ to $(v_k, u_b)$ along the path in $V_{u_a}$, which exists as shown earlier since H is connected. %
\end{enumerate}
Therefore, we have constructed a path in $G\Box H$ connecting $(v_i, u_a)$ and $(v_k, u_b)$ for any pair of vertices, proving that $G\Box H$ is connected.
\end{proof}

An example of the construction is given in Figure~\ref{fig:cartesian}.
The commutativity property of the Cartesian product correspond to the fact that the terms in the Hamiltonian commute with each other since they are applied to different subsets of qubits. This means that no Trotterization is needed for each term of the Hamiltonian and the total cost $Cost(H_{M})=\sum_{i}Cost(H_{i})$ is the sum of the optimal costs for each subsets of bits.
In conclusion, if the constraints for each independent subset of the bit string are small and unstructured, we can still use the logical X algorithm to find the optimal mixers for each set of qubits and then tensorize them.

\subsection{A mixer for the multi-k-hot case}
\label{sec:multikhot}
 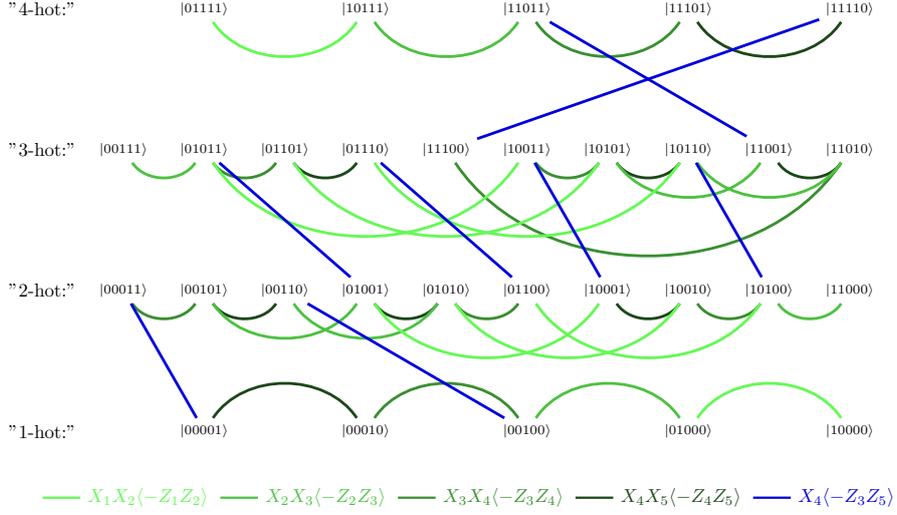
\begin{figure}%
    \centering
    \begin{adjustbox}{width=0.7\linewidth}
    \begin{tikzpicture}

\node[color=black,opacity=1] (1hot) at (0,0.175*1.4) {"1-hot:"};
\node[color=black,opacity=1] (1hot) at (0,2.058*1.4) {"2-hot:"};
\node[color=black,opacity=1] (1hot) at (0,3.942*1.4) {"3-hot:"};
\node[color=black,opacity=1] (1hot) at (0,5.825*1.4) {"4-hot:"};

\node[color=black,opacity=1] (00001) at (1*3,0.175*1.4) {$\scriptstyle\ket{00001}$};
\node[color=black,opacity=1] (00010) at (2*3,0.175*1.4) {$\scriptstyle\ket{00010}$};
\node[color=black,opacity=1] (00100) at (3*3,0.175*1.4) {$\scriptstyle\ket{00100}$};
\node[color=black,opacity=1] (01000) at (4*3,0.175*1.4) {$\scriptstyle\ket{01000}$};
\node[color=black,opacity=1] (10000) at (5*3,0.175*1.4) {$\scriptstyle\ket{10000}$};

\node[color=black,opacity=1] (00011) at (0.5*3,2.058*1.4) {$\scriptstyle\ket{00011}$};
\node[color=black,opacity=1] (00101) at (1.0*3,2.058*1.4) {$\scriptstyle\ket{00101}$};
\node[color=black,opacity=1] (00110) at (1.5*3,2.058*1.4) {$\scriptstyle\ket{00110}$};
\node[color=black,opacity=1] (01001) at (2.0*3,2.058*1.4) {$\scriptstyle\ket{01001}$};
\node[color=black,opacity=1] (01010) at (2.5*3,2.058*1.4) {$\scriptstyle\ket{01010}$};
\node[color=black,opacity=1] (01100) at (3.0*3,2.058*1.4) {$\scriptstyle\ket{01100}$};
\node[color=black,opacity=1] (10001) at (3.5*3,2.058*1.4) {$\scriptstyle\ket{10001}$};
\node[color=black,opacity=1] (10010) at (4.0*3,2.058*1.4) {$\scriptstyle\ket{10010}$};
\node[color=black,opacity=1] (10100) at (4.5*3,2.058*1.4) {$\scriptstyle\ket{10100}$};
\node[color=black,opacity=1] (11000) at (5.0*3,2.058*1.4) {$\scriptstyle\ket{11000}$};

\node[color=black,opacity=1] (00111) at (0.5*3,3.942*1.4) {$\scriptstyle\ket{00111}$};
\node[color=black,opacity=1] (01011) at (1.0*3,3.942*1.4) {$\scriptstyle\ket{01011}$};
\node[color=black,opacity=1] (01101) at (1.5*3,3.942*1.4) {$\scriptstyle\ket{01101}$};
\node[color=black,opacity=1] (01110) at (2.0*3,3.942*1.4) {$\scriptstyle\ket{01110}$};
\node[color=black,opacity=1] (11100) at (2.5*3,3.942*1.4) {$\scriptstyle\ket{11100}$};
\node[color=black,opacity=1] (10011) at (3.0*3,3.942*1.4) {$\scriptstyle\ket{10011}$};
\node[color=black,opacity=1] (10101) at (3.5*3,3.942*1.4) {$\scriptstyle\ket{10101}$};
\node[color=black,opacity=1] (10110) at (4.0*3,3.942*1.4) {$\scriptstyle\ket{10110}$};
\node[color=black,opacity=1] (11001) at (4.5*3,3.942*1.4) {$\scriptstyle\ket{11001}$};
\node[color=black,opacity=1] (11010) at (5.0*3,3.942*1.4) {$\scriptstyle\ket{11010}$};

\node[color=black,opacity=1] (01111) at (1*3,5.825*1.4) {$\scriptstyle\ket{01111}$};
\node[color=black,opacity=1] (10111) at (2*3,5.825*1.4) {$\scriptstyle\ket{10111}$};
\node[color=black,opacity=1] (11011) at (3*3,5.825*1.4) {$\scriptstyle\ket{11011}$};
\node[color=black,opacity=1] (11101) at (4*3,5.825*1.4) {$\scriptstyle\ket{11101}$};
\node[color=black,opacity=1] (11110) at (5*3,5.825*1.4) {$\scriptstyle\ket{11110}$};

\definecolor{XXIII}{RGB}{100,249,84}
\definecolor{IXXII}{RGB}{77,193,65}
\definecolor{IIXXI}{RGB}{57,143,48}
\definecolor{IIIXX}{RGB}{28,71,24}
\definecolor{IIIXI}{RGB}{0,0,221}

\Edge[color=IIIXX,bend=-56.31](00010)(00001)
\Edge[color=IIIXX,bend=-56.31](00101)(00110)
\Edge[color=IIIXX,bend=-56.31](01001)(01010)
\Edge[color=IIIXX,bend=-56.31](01101)(01110)
\Edge[color=IIIXX,bend=-56.31](10001)(10010)
\Edge[color=IIIXX,bend=-56.31](10101)(10110)
\Edge[color=IIIXX,bend=-56.31](11001)(11010)
\Edge[color=IIIXX,bend=-56.31](11101)(11110)

\Edge[color=IIXXI,bend=-56.31](00100)(00010)
\Edge[color=IIXXI,bend=-56.31](00011)(00101)
\Edge[color=IIXXI,bend=-56.31](01010)(01100)
\Edge[color=IIXXI,bend=-56.31](01011)(01101)
\Edge[color=IIXXI,bend=-56.31](10010)(10100)
\Edge[color=IIXXI,bend=-56.31](10011)(10101)
\Edge[color=IIXXI,bend=-56.31](11100)(11010)
\Edge[color=IIXXI,bend=-56.31](11011)(11101)

\Edge[color=IXXII,bend=-56.31](01000)(00100)
\Edge[color=IXXII,bend=-56.31](00101)(01001)
\Edge[color=IXXII,bend=-56.31](00110)(01010)
\Edge[color=IXXII,bend=-56.31](00111)(01011)
\Edge[color=IXXII,bend=-56.31](10100)(11000)
\Edge[color=IXXII,bend=-56.31](10101)(11001)
\Edge[color=IXXII,bend=-56.31](10110)(11010)
\Edge[color=IXXII,bend=-56.31](10111)(11011)

\Edge[color=XXIII,bend=-56.31](10000)(01000)
\Edge[color=XXIII,bend=-56.31](01001)(10001)
\Edge[color=XXIII,bend=-56.31](01010)(10010)
\Edge[color=XXIII,bend=-56.31](01011)(10011)
\Edge[color=XXIII,bend=-56.31](01100)(10100)
\Edge[color=XXIII,bend=-56.31](01101)(10101)
\Edge[color=XXIII,bend=-56.31](01110)(10110)
\Edge[color=XXIII,bend=-56.31](01111)(10111)

\Edge[color=IIIXI](00001)(00011)
\Edge[color=IIIXI](00100)(00110)
\Edge[color=IIIXI](01001)(01011)
\Edge[color=IIIXI](01100)(01110)
\Edge[color=IIIXI](10001)(10011)
\Edge[color=IIIXI](10100)(10110)
\Edge[color=IIIXI](11001)(11011)
\Edge[color=IIIXI](11100)(11110)

\def\shift{3.3}

\draw[XXIII,very thick] (0*\shift,-1.) node[right,xshift=.7cm] {$X_1 X_2\langle -Z_1 Z_2\rangle$} -- (0*\shift+.7,-1.);

\draw[IXXII,very thick] (1*\shift,-1.) node[right,xshift=.7cm] {$X_2 X_3\langle -Z_2 Z_3\rangle$} -- (1*\shift+.7,-1.);

\draw[IIXXI,very thick] (2*\shift, -1.) node[right,xshift=.7cm] {$X_3 X_4\langle -Z_3 Z_4\rangle$} -- (2*\shift+.7,-1.);

\draw[IIIXX,very thick] (3*\shift,-1.) node[right,xshift=.7cm] {$X_4 X_5\langle -Z_4 Z_5\rangle$} -- (3*\shift+.7,-1.);

\draw[IIIXI,very thick] (4*\shift,-1.) node[right,xshift=.7cm] {$X_4\langle -Z_3 Z_5\rangle$} -- (4*\shift+.7,-1.);

\end{tikzpicture}
    \end{adjustbox}
    \caption{Example graph for 
   $ 
    B=\left\{\ket*{z}, \ z\in\{0,1\}^5, 1 \leq \sum_j z_j \leq 4 \right\}
    $, i.e. $B$ consist of all computational basis states except $\ket{00000}$ and $\ket{11111}$.
    One of the optimal mixers is shown in the graph.
    It consist of the XY-mixer which mixes all states with the same number of ones (green arrows), plus a restricted $X$-mixer, which connects sates with different number of ones (blue arrows).
    }
    \label{fig:Dicke514}
\end{figure}
We consider scenarios where the number of ones is not constrained to a single value, but to for instance to a range $0\leq k_1\leq k \leq k_2\leq n$. We define the feasible set $B_{k_{1},k_{2}}$ as
\begin{equation}
    B_{k_{1},k_{2}}=\left \{\ket{z},z\in\left \{ 0,1 \right \}^{n}, k_{1}\leq\sum_{i}^{n}z_{i}\leq k_{2} \right \}.
    \label{eq:k1k2set}
\end{equation}
We already know that an appropriate XY-mixer provides an undirected and connected graph that connects all the states in a fixed $B_{k}$.
This means  we need to look for a new set of logical Xs that provides transitions between different $B_{k}$ sets. Although identifying an optimal solution becomes more challenging in this context, as will be clear later, with the stabilizer formalism we can still construct an efficient mixer.

For this construction, we utilize a logical X operator comprised solely of a single Pauli $X_{i}$ applied to a specific qubit i. By design the operator flips the i-th qubit providing transitions between $B_{k}$ and $B_{k\pm1}$.
To ensure the condition that defines $B_{k_{1},k_{2}}$ we still need to avoid transition from $B_{k_{1}}$ to $B_{k_{1}-1}$ and $B_{k_{2}}$ to $B_{k_{2}+1}$.
Let's start by constructing the sets of states in $\ket{v}\in B_{k_{1}}\cup B_{k_{2}}$ such that $X\ket{v}\notin B_{k_{1},k_{2}} $.
\begin{equation}
    V^{X_{i}}_{k_{1},k_{2}}=\left\{\ket{v}\in B_{k_{1}}\cup B_{k_{2}}\quad s.t\quad X_{i}\ket{v}\notin B_{k_{1}+1}\cup B_{k_{2}-1} \right\}
\end{equation}
Next, we construct the projector restricted to the feasibel subspace, i.e.,
\begin{equation}
    \Pi=\mathbb{I}-\restr{\Pi_{V^{X_{i}}_{k_{1},k_{2}}}}{span B_{k_{1},k_{2}}}.
\end{equation}

By implementing these strategy, we constructed an efficient \add{LX-}mixer capable of providing transitions between different $B_{k}$ sets while respecting the specified constraints.
In this case many different projectors can be constructed and so to find the optimal mixer we are suppose to look for the projector that minimizes the CNOT cost function. %
Figure~\ref{fig:Dicke514} shows the construction for the case of $B_{1,4}$.

\subsection{Simulation example for a constrained MAXCUT instance}
\begin{figure}
     \centering
     \begin{subfigure}[b]{0.42\textwidth}
     \centering
     \begin{tikzpicture}[scale=1.3]
\foreach \i in {0,...,9} {
   \node[color=black, opacity=1, draw, circle, inner sep=1pt] (\i) at ({360/10 * \i}:2) {\i};
}
  \draw (0) to (4);
  \draw (0) to (7);
  \draw (1) to (4);
  \draw (1) to (5);
  \draw (1) to (6);
  \draw (2) to (4);
  \draw (2) to (5);
  \draw (2) to (6);
  \draw (2) to (7);
  \draw (2) to (9);
  \draw (3) to (4);
  \draw (3) to (5);
  \draw (3) to (6);
  \draw (3) to (8);
  \draw (4) to (5);
  \draw (4) to (7);
  \draw (4) to (8);
  \draw (4) to (9);
  \draw (5) to (6);
  \draw (5) to (7);
  \draw (5) to (8);
  \draw (5) to (9);
  \draw (7) to (8);
  \draw (8) to (9);

\end{tikzpicture}
        \caption{10 vertex Barabási–Albert graph instance. The weights for this specific case are distributed uniformly between (0,1) and are available at \url{https://github.com/OpenQuantumComputing}.}
     \end{subfigure}
     \hspace{1cm}
     \begin{subfigure}[b]{0.42\textwidth}
     \centering
     \begin{tikzpicture}[scale=.8]

\definecolor{darkgray176}{RGB}{176,176,176}
\definecolor{steelblue31119180}{RGB}{31,119,180}

\begin{axis}[
tick align=outside,
tick pos=left,
title={LX-QAOA},
x grid style={darkgray176},
xlabel={depth},
xmin=0, xmax=22,
xtick style={color=black},
y grid style={darkgray176},
ylabel={Approx. ratio},
ymin=0., ymax=1.0,
ytick style={color=black}
]
\addplot [semithick, steelblue31119180, mark=*, mark size=3, mark options={solid}]
table {%
1 0.764996879940149
5 0.810134664613433
9 0.849907050297549
13 0.872256196424733
17 0.900113492076675
21 0.93809272858573
};
\end{axis}

\end{tikzpicture}
     \caption{Approximation ratio \add{$\alpha$} \remove{results} using the LX-QAOA ansatz with the COBYLA optimizer\add{, with $\alpha=1$ indicating that the QAOA achieves the optimal solution}.}
    \end{subfigure}
    \caption{
The LX-QAOA ansatz has been employed up to depth p=21 using the gradient free COBYLA optimizer for the optimization of the cost function \eqref{eq:model}. The optimal solution has been computed brute-force and compared to the QAOA results using the approximation ratio defined as $\alpha=\frac{C}{C_\text{opt}}$\add{,
where $C$ represents the expectation value for QAOA and $C_\text{opt}$ denotes the cost of the optimal solution.}
    }
    \label{fig:maxcut}
\end{figure}
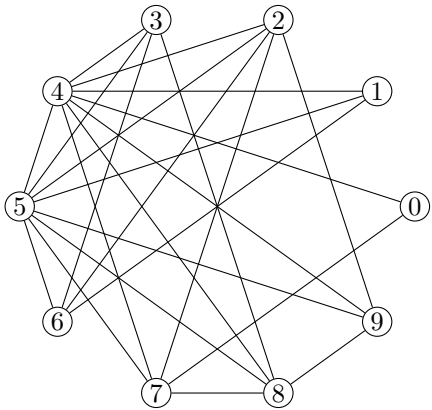
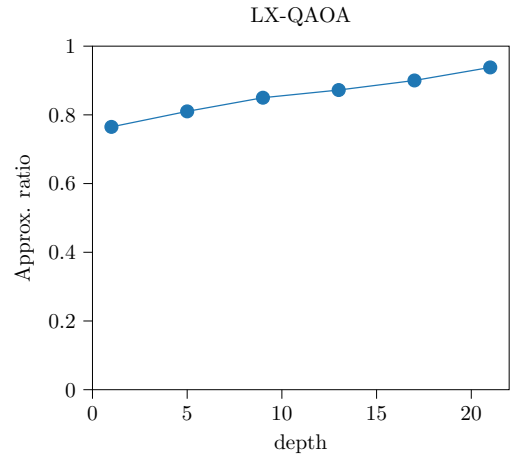
Finally, we provide a numerical example for the specific case of a constrained MAXCUT as a proof-of-principle for \add{LX-}QAOA.
The problem has constraints with the Cartesian product structure of Section~\ref{sec:cartesianprod}, where each subspace is a multi-k-hot state, more specifically $B = B_{0,1} \otimes B_{0,1}$.
The optimization problem is formulated as follows:
\begin{equation}
        \underset{\{x_{i,j}\}_{i,j}}{\operatorname{max}} \quad  \sum_{i,k=1}^2 \sum_{j,l=1}^{5} x_{i,j}(1-x_{k,l}) C_{ijkl}, \quad x_{i,j} \in \{0,1\}, \quad \sum_{j=1}^{5}  x_{i,j} \leq 1, \forall i.
    \label{eq:model}
\end{equation}
The quadratic cost function is encoded in the weights of the graph, as illustrated in Figure~\ref{fig:maxcut}~(a).
For this constraint MAXCUT problem, we introduce a 5 vertices constraint for each of the 2 subsets, ensuring that the state for each set of vertices is either $\ket{0}$ or a 1-hot state.
Figure~\ref{fig:maxcut} shows the simulation results, indicating a good convergence of the approximation.

Starting in the uniform superposition of all feasible states, we run the QAOA algorithm with the \add{LX-}mixer given by $H = I^{5}\otimes H_{0,1} + H_{0,1} \otimes I^{5} $, with
\begin{equation}
    H_{0,1} = X_1X_2\langle -Z_1Z_2\rangle +
        X_2X_3\langle -Z_2Z_3\rangle +
        X_3X_4\langle -Z_3Z_4\rangle +
        X_4X_5\langle -Z_4Z_5\rangle +
        X_1\langle Z_2Z_3Z_4Z_5\rangle.
\end{equation}
Simulation results show that the approximation ratio increases with depth, see Figure~\ref{fig:maxcut}~(b).
By design the approach only gives feasible solutions.
The approach can be easily scaled to larger instances by design.

\section{Availability of Data and Code}
All data and the python/jupyter notebook source code for reproducing the results obtained in this article are available at  \url{https://github.com/OpenQuantumComputing} \add{in the repositories LogicalXMixer, and QAOA}.

\section{Author contributions}
Franz G. Fuchs formulated the concept, developed the methodology, wrote the software, made the formal analysis and investigation, wrote the article, made the visualizations.
Ruben Pariente Bassa, contributed with the section on applications, including concepts, analysis, software simulations, writing and visualizations.

\section{Acknowledgment}
We would like to thank for funding of the work by the Research Council of Norway through project number 332023 (80\%) and 324944 (20\%).
The authors wish to acknowledge the helpful discussions with Aleksandra Glesaaen, Giorgio Sartor, Kjetil Olsen Lye, Alexander Müller-Hermes, Alexander Stasik, Carlo Mannino, and Erlend Syljuåsen.
\add{We would also like to thank the reviewers for taking the time and effort necessary to review the manuscript.}

\section{Conclusion and Future Work}
We have presented a useful connection between the construction of mixers that act only on subspaces of the full Hilbert space and the stabilizer formalism.
The proposed method provides a general approach to construct resource efficient mixers for a given feasible subspace.
The numerical results show a large reduction of the number of CNOT gates required.
One limitation of this work is that the feasible subspace has to be known beforehand, so the method will for instance not be useful if it is NP-hard to even find a feasible solution.
\add{The proposed method is only applicable to problem domains that have an associated indexing function, which efficiently assigns a unique integer index to each feasible solution.}
Another limitation is that we do not expect our approach to scale to large dimensions of the Hilbert space.
\add{For an unstructured feasible set $B$, the presented algorithm has a worst case scaling of $\mathcal{O}(2^{2^{|B|}})$.}
However, the method might still be useful in the following case: When the size of the problem increases, the structure of the subspace has to be such that there is a systematic way to describe the resulting constrained mixers for larger instances.
\remove{This can be similar to Section~\ref{sec:cartesianprod}.}
An example of this would be the "XY"-mixer, which is straight forward to generalize to any number of qubits\add{, or along the lines of Section~\ref{sec:cartesianprod}.}
In that sense, our approach can be seen as a discovery method for efficient constrained mixers for structured problems.
Future work includes deriving efficient methods to prepare initial states within the feasible subspace for QAOA, as well as a mathematical analysis of the complexity of the graph problems for optimal Trotterization.
Even though it might turn out to be NP-hard to find an optimal Trotterization, efficient heuristics \add{to lower the scaling of the algorithm} can be applied to reduce the cost of the circuits of the mixers considerably, without necessarily finding the optimal solution.

\sloppy
\emergencystretch=1em
\printbibliography
\end{document}